\let\proof\@undefined
\let\endproof\@undefined
\newcommand{\colvec}[2][.85]{%
  \scalebox{#1}{%
    \renewcommand{\arraystretch}{1}%
    $\begin{bmatrix}#2\end{bmatrix}$%
  }
}
\newtheorem{prop}{Proposition}
\newtheorem{cor}{Corollary}
\newtheorem{defin}{Definition}
\newtheorem{thm}{Theorem}
\theoremstyle{remark}
\newcommand{\mailtodomain}[1]{\href{mailto:#1@domain.com}{\nolinkurl{#1}}}
\newcommand\new[1]{{\color{black}{#1}}}
\title{A Unified Dissertation on Bearing Rigidity Theory}
\author{Giulia Michieletto\orcidA{}, 
		Angelo Cenedese\orcidB{}
		and Daniel Zelazo\orcidC{}  
\thanks{
This work was partly supported by MIUR (Italian Ministry for Education) under the initiative "Departments of Excellence" (Law 232/2016) and by University of Padova under the Visiting Scientist 2019 program and the TSTARK DEI-SEED 2020 project.
G.~Michieletto is with the Department of Engineering and Management, A.~Cenedese is with the Department of Information Engineering, both at the University of Padova, Padova, Italy ({\tt \{angelo.cenedese, giulia.michieletto\}@unipd.it});
			D.~Zelazo is with the Faculty of Aerospace Engineering, Technion-Israel Institute of Technology, Haifa, 32000, Israel ({\tt dzelazo@technion.ac.il}).
		}
	}
	\definecolor{lime}{HTML}{A6CE39}
	\newcommand{\orcidicon}{%
		\begin{tikzpicture}
		\draw[lime, fill=lime] (0,0) 
		circle [radius=0.16] 
		node[white] {{\fontfamily{qag}\selectfont \tiny ID}};
		\draw[white, fill=white] (-0.0625,0.095) 
		circle [radius=0.007];
		\end{tikzpicture}
		\hspace{-2mm}}
\xdef\csname orcid\x\endcsname{\noexpand\href{https://orcid.org/\csname orcidauthor\x\endcsname}{\noexpand\orcidicon}}
\newcommand{\identityMatrix}[1]{\mathbf{I}_{#1}}
\newcommand{\vectOnes}[1]{\mathbf{1}_{#1}}
\newcommand{\transposed}{\top}
\newcommand{\diag}[1]{\mathrm{diag}( #1) }
\newcommand{\skewMatrix}[1]{\left[ #1 \right]_{\times}}
\newcommand{\Immagine}[1]{\mathrm{Im} \left(#1\right)}
\newcommand{\rank}[1]{\mathrm{rk}\left(#1\right)}
\newcommand{\Ker}[1]{\mathrm{ker} \left(#1\right)}
\newcommand{\nullity}[1]{\mathrm{null}\left(#1\right)}
\newcommand{\Span}[1]{\mathrm{span}\left\lbrace#1\right\rbrace}
\newcommand{\Dim}[1]{\mathrm{dim}\left(#1\right)}
\newcommand{\worldframe}{ \mathscr{F}_W}
\newcommand{\graphG}{\mathcal{G}}
\newcommand{\graphDef}{\mathcal{G}=\left(\mathcal{V}, \mathcal{E} \right)}
\newcommand{\graphK}{\mathcal{K}}
\newcommand{\incidence}{\mathbf{E}}
\newcommand{\incidenceOut}{\mathbf{E}_
{o}}
\newcommand{\incidenceExp}{\bar{\incidence}}
\newcommand{\incidenceOutExp}{\bar{\mathbf{E}}_\text{o}}
\newcommand{\framework}[1]{\left(\mathcal{G}, #1 \right)}
\newcommand{\config}{\chi}
\newcommand{\setEquivalence}[1]{\mathcal{Q}\left(#1\right)}
\newcommand{\setCongruence}[1]{\mathcal{C}\left(#1\right)}
\newcommand{\setNeigh}[1]{\mathcal{U}\left(#1\right)}
\newcommand{\vectPos}{\mathbf{p}}
\newcommand{\vectp}[2]{\mathbf{p}_{#1#2}}
\newcommand{\vectpNorm}[2]{\bar{\mathbf{p}}_{#1#2}}
\newcommand{\bearingFunctionG}[1]{\mathbf{b}_{\mathcal{G}}\left( #1  \right)}
\newcommand{\bearingFunctionK}[1]{\mathbf{b}_{\graphK}\left( #1 \right)}
\newcommand{\bearingMatrixG}{\mathbf{B}_{\graphG}\left(\config\right)}
\newcommand{\bearingMatrixK}{\mathbf{B}_{\graphK}\left(\config\right)}
\newcommand{\infMotion}{\boldsymbol{\delta}}
\newcommand{\infMotionPos}{\boldsymbol{\delta}_{p}}
\newcommand{\infMotionAtt}{\boldsymbol{\delta}_{a}}
\newcommand{\coordRot}{\mathcal{R}_{\circlearrowleft}}
\newcommand{\trivialSpace}{\mathcal{S}_t}
\begin{document}

\maketitle

\begin{abstract}
This work focuses on bearing rigidity theory, namely the branch of knowledge investigating the structural properties necessary for multi-element systems to preserve the inter-unit bearings under deformations. The contributions of this work are twofold. The first one consists in the development of a general framework for the statement of the principal definitions and properties of bearing rigidity.  We show that this approach encompasses results existing in the literature, and also provides a systematic approach for studying bearing rigidity on any differential manifold in $SE(3)^n$, where $n$ is the number of agents.  The second contribution is the derivation of a general form of the rigidity matrix, a central construct in the study of rigidity theory.  We provide a necessary and sufficient condition for the infinitesimal rigidity of a bearing framework as a property of the rank of the rigidity matrix.  Finally, we present two examples of multi-agent systems not encountered in the literature and we study their rigidity properties using the developed methods. 
\end{abstract}

\section{Introduction}
\label{sec:intro}

According to the most general definition, rigidity theory aims at studying the \textit{stiffness} of a given system, understood as a reaction to an induced deformation. 
The origin of this branch of knowledge dates back to Euler in 1776~\cite{Euler}. In the centuries since, rigidity analysis has been extended from geometric systems to physical structures, impacting 
several and different research areas, ranging from mechanics to biology, and from robotics to chemistry~(see~\cite{thorpe1999rigidity} and the references therein). 
An outstanding result in this research thread is the work of Asimow and Roth providing the mathematical description of rigid systems of bars and joints through the notion of a \textit{framework}~\cite{asimow1978rigidity}.
This corresponds to the graph-based representation of the system (so that each vertex corresponds to a joint in the structure and each edge represents a bar connecting two elements),  jointly with a set of elements in $\mathbb{R}^d$,  $d\geq2$, describing the position of the corresponding units composing the structure~\cite{asimow1978rigidity}. 

Recently, overcoming the standard bar-and-joints frameworks,  rigidity theory has enlarged its focus toward autonomous multi-agent systems wherein the connections among the formation elements are virtual and represent sensing relations and capabilities, {and/or collective objectives}~(see, for instance,~\cite{Ahn2020, de2018formation} and the references therein). The concept of a framework has thus been redefined by considering also manifolds more complex than the ($n$-dimensional) Euclidean space. In these cases, rigidity theory turns out to be an important architectural property of many multi-agent systems where a common {global} reference frame \new{could} be unavailable but the involved devices are characterized by sensing, communication and movement capabilities. In particular, the rigidity concepts and results suitably fit applications connected to the stabilization and motion control of mobile robot formations and to sensor cooperation for localization, exploration, mapping and tracking of a target~(see, e.g.,~\cite{krick2009stabilisation,cai2015adaptive, ramazani2017rigidity,sun2018quantization, tron2015rigid, zhao2016localizability,zhao2019bearing, michieletto2019formation, Spica2016}).

\subsection{Distance vs. Bearing Rigidity} 
\label{sec:distance_vs_bearing}
{ }

Within the multi-agent systems context, rigidity properties for a given framework deal with agent interactions, either through the available sensing measurements and/or through a common objective function.  
From this perspective, the literature differentiates between \textit{distance rigidity} and \textit{bearing rigidity}. These two branches of rigidity theory in multi-agent systems arose due to the interdependence between the available sensing capabilities in a multi-agent system, and the team objective they are trying to solve.  For example, target formations specified using inter-agent distances lead to distributed controllers that require agents to gather relative position measurements~\cite{krick2009stabilisation}. Here, distance rigidity theory emerges in the convergence analysis of these systems.  On the other hand, many robotic applications employ direction or bearing-based sensors to achieve coordinated tasks.  This motivates a need for objective functions defining formations using inter-agent bearings, and consequently an extension of rigidity theory to the study of bearing-constrained frameworks~\cite{zhao2019bearing}.

The principal notions about distance rigidity are illustrated in~\cite{cai2015adaptive,eren2002framework, hendrickx2007directed,  baillieul2007combinatorial, krick2009stabilisation, zhu2009stiffness, wu2010rigidity,  williams2014evaluating, zelazo2015decentralized}. 
 These works explain how distance constraints for a framework can be summarized into a properly defined matrix whose rank determines the rigidity properties of the system analogously to the case of frameworks embedded in $\mathbb{R}^d$. In such a context,  it turns out to be useful to consider the given multi-agent system as a bar-and-joint structure where the agents are modeled as particles (joints) in $\mathbb{R}^d$, 
 and the pairs of interacting devices can be thought as being joined by bars whose lengths enforce the inter-agent distance constraints.

Bearing rigidity in $\mathbb{R}^2$ (also referred to as  \textit{parallel rigidity} in the literature~\new{\cite{eren2003sensor}}) is instead~determined by normal constraints over the directions of interacting devices, namely the edges of the graph associated to the framework, as explained in~\cite{eren2007using, bishop2011stabilization, 2012-FranchiGiordano-RigidityRd,eren2012formation}. 
These constraints entail the preservation of the angles formed between pairs of interconnected agents and the lines joining them, i.e., the inter-agent bearings.
Similar inter-agent direction constraints can be stated to access the rigidity properties of frameworks embedded in $\mathbb{R}^d$ with $d>2$, where the bearing between two agents coincides with their normalized relative direction vector~\cite{tron2015rigid,oh2014formation,  2016-ZhaoZelazo-RigidityRd, zhao2016localizability, karimian2017theory, zhao2017laman, zhao2019bearing}. In both cases, the agents are modeled as particles, 
and the necessary and sufficient condition to guarantee the rigidity properties of a given framework rests upon the rank of a matrix summarizing the involved constraints.

Dealing with a more realistic scenario, in~\cite{2014-ZelazoFranchiGiordano-RigiditySE2, 2015-ZelazoGiordanoFranchi-RigiditySE2, 2012-FranchiMasoneGrabe-RigidityR3xSO2,2016-SchianoFranchiZelazoGiordano-RigidityR3xSO2, 2016-SpicaGiordano-RigidityR3xSO2, 2016-MichielettoCenedeseFranchi-RigiditySE3} bearings are assumed to be expressed in the local frame of each agent composing the framework. This implies that each device in the group is modeled as a rigid body having a certain position and orientation with respect to a common global frame which is \new{generally} unavailable to the group. In particular, in~\cite{2014-ZelazoFranchiGiordano-RigiditySE2, 2015-ZelazoGiordanoFranchi-RigiditySE2} the attention is focused on multi-agent systems acting in the plane, in~\cite{2012-FranchiMasoneGrabe-RigidityR3xSO2, 2016-SchianoFranchiZelazoGiordano-RigidityR3xSO2, 2016-SpicaGiordano-RigidityR3xSO2} the study is extended to  the 3D space although limiting the agents attitude kinematics to rotations along only one axis, while in~\cite{2016-MichielettoCenedeseFranchi-RigiditySE3, Chen2019} fully-actuated formations are considered by assuming to deal with systems of agents having six controllable degrees of freedom (dofs). {The recent works~\cite{Schiano2018, Heintzman2018} have also established connections between rigidity and non-linear observability of dynamic systems: analogously to the former cases, the rigidity properties of the multi-agent systems are studied through the spectral analysis of a matrix that naturally appears in the observability one. \new{However, such an approach} abstracts many concepts that are taken into account in this work, as the underlying graph
of a framework and the manifold it is embedded in.}

\subsection{A  General Framework and Unified View of Bearing Rigidity}

In the past, distance rigidity has been deeply investigated from the theoretical perspective and the related multi-agent systems applications are nowadays copious, mainly focusing on formation control and localization (see, e.g.,~\cite{connelly2005generic,gortler2013affine,gortler2014characterizing, Anderson2008} for a comprehensive overview). Bearing rigidity theory, instead, has been developed only recently, gaining popularity in the last years as a multi-agent systems control strategy.

Motivated by the similarities emerging from the existing literature, the first contribution of this work proposes \textit{a unified and general framework for a bearing rigidity theory}. This allows to understand the similarities and differences of the current state-of-the-art results. To this end, {Table}~\ref{tab:2} in {Section}~\ref{sec:realization} provides a comprehensive overview of the principal features of the bearing rigidity theory for frameworks defined on different domains, while in Section \ref{sec:applications} two examples are given to demonstrate this approach for frameworks that to our knowledge do not appear in the literature.  A distinguishing feature of this contribution is the explicit consideration of frameworks over \emph{directed} graphs.  Rigidity theory for directed frameworks remains vastly unexplored, with some early results for distance constrained frameworks given in \cite{hendrickx2007directed}, and this work aims to provide a formal foundation for approaching this topic for bearing frameworks.
This unified view reveals that all notions of bearing rigidity are related through the so-called rigidity matrix.  
The second contribution of this work consists of deriving \textit{a general form for the rigidity matrix} that is predictive in the sense that its structure is completely determined by the configuration and interaction graph of the multi-agent system. We then provide a necessary and sufficient condition relating the rank of the rigidity matrix to rigidity properties of a given multi-agent system, for any agent domain.

The rest of the paper is organized as follows. {Section}~\ref{sec:PrelimNotation} summarizes basic notations and background on graph theory. 
{Section}~\ref{sec:MainDef} is devoted to the general definition and properties of  bearing rigidity theory. { The main results presented the unified framework for bearing rigidity is give in Section \ref{sec:unified_view}.  Section \ref{sec:homog_formations} discuses the results in the context of homogeneous formations, and two case studies are given in Section \ref{sec:applications}.  Finally,   {Section}~\ref{sec:colinear_cases} is devoted to a brief discussion about colinear formation cases, and concluding remarks are offered in Section \ref{sec:conclusions}. Appendix, lastly, provides the proof of an auxiliary theoretical result.}

\section{Preliminaries and Notation}
\label{sec:PrelimNotation}

A \textit{graph} is an ordered pair $\graphDef$ consisting of the vertex set $\mathcal{V} = \{ v_1 \dots v_n \}$ and the edge set $\mathcal{E}= \{ e_1 \dots e_m \} \subseteq \mathcal{V} \times \mathcal{V}$, having cardinality $|\mathcal{V}|=n$ and $|\mathcal{E}|=m$, respectively.
We distinguish between undirected, directed, and oriented graphs. An \textit{undirected graph} is a graph whereby edges have no orientation, thus $e_k = (v_i,v_j) \in \mathcal{E}$ is identical to $e_h = (v_j,v_i) \in \mathcal{E}$. Contrarily, a \textit{directed graph} is a graph whereby edges have  orientation so that the edge $e_k = (v_i,v_j) \in \mathcal{E}$ is directed from $v_i \in \mathcal{V}$ (head) to $v_j \in \mathcal{V}$ (tail). An \textit{oriented graph} is an undirected graph jointly with an orientation that is the assignment of a unique direction to each edge, hence only one directed edge ($e_k = (v_i,v_j)$ or $e_h = (v_j,v_i)$) can exist between two vertices $v_i,v_j \in \mathcal{V}$.

For any graph $\graphDef$, the corresponding \textit{complete graph} $\mathcal{K}=(\mathcal{V},\mathcal{E}_\mathcal{K})$ is the graph characterized by the same vertex set $\mathcal{V}$, while the edge set consists of all pairs of distinct vertices. Thus, for undirected graphs $|\mathcal{E}_\mathcal{K}|=n\left(n-1\right)/2$, and for directed graphs $|\mathcal{E}_\mathcal{K}|=n\left(n-1\right)$. 

For a directed/oriented graph, the \textit{incidence matrix} \mbox{$\incidence \in \mathbb{R}^{n\times m}$} is the $\{0,\pm 1\}$-matrix defined as 
\begin{equation}
\label{PrelimNotation:eq:IncidenceMatrix}
\left[\mathbf{E}  \right] _{ik}= 
\begin{cases}
- 1 & \text{if} \;\; e_k=(v_i,v_j) \in \mathcal{E} \; \text{(\textit{outgoing edge})}\\[-0.1cm]
\ \ 1 & \text{if} \;\; e_k = (v_j,v_i) \in \mathcal{E} \; \text{(\textit{ingoing edge})}\\[-0.1cm]
\ \ 0 &   \text{otherwise}, 
\end{cases}
\end{equation}
and, in a similar way, the matrix $\incidenceOut \in\mathbb{R}^{n\times m}$ is given by 
\begin{equation}
\label{PrelimNotation:eq:IncidenceMatrix_out}
\left[\incidenceOut\right]_{ik}= 
\begin{cases}
-1  & \text{if} \;\; e_k=(v_i,v_j) \in \mathcal{E} \; \text{(\textit{outgoing edge})}\\[-0.1cm]
\ \ 0 & \text{otherwise} .
\end{cases}
\end{equation}
We introduce also the matrices $\incidenceExp= \incidence \otimes \mathbf{I}_d \in \mathbb{R}^{dn\times dm}$ and $\incidenceOutExp= \incidenceOut \otimes \mathbf{I}_d \in \mathbb{R}^{dn\times dm}$, where $\otimes$ indicates the Kronecker product, $\mathbf{I}_d$ is the the $d$-dimensional identity matrix, and \mbox{$d\geq 2$} refers to the dimension of the considered space. 

The $d$-sphere, i.e., the unit sphere embedded in $\mathbb{R}^{d+1}$, is denoted as $\mathbb{S}^{d}$. 
We recall that the 1-dimensional manifold $\mathbb{S}^1$ (corresponding to the unit circle) is isomorphic to the Special Orthogonal group  $SO(2) = \{ \mathbf{R} \!\in \! \mathbb{R}^{2 \times 2 } \; \vert \; \mathbf{R}\mathbf{R}^\top \!= \! \mathbf{I}_2, \; \text{det}(\mathbf{R}) = +1\}$ which can be parametrized by a single angle $\alpha \in \left[0, 2\pi\right)$. The Special Orthogonal group $SO(3) = \{ \mathbf{R} \!\in\!  \mathbb{R}^{3 \times 3 } \; \vert \; \mathbf{R}\mathbf{R}^\top\! =\! \mathbf{I}_3,  \text{det}(\mathbf{R})=1\}$, instead,  is not isomorphic to the 2-sphere $\mathbb{S}^2$, but it holds that $\mathbb{S}^2 = {SO(3)/SO(2)}$. 
In addition, the Cartesian product $\mathbb{R}^d \times SO(d)$ corresponds to the Special Euclidean group $SE(d)$.

Considering $\mathbb{R}^d$, the vectors of its canonical basis  are indicated as 
$\mathbf{e}_i, \; i \in \{1 \ldots d\}$, and they have a one in the $i$-th entry  and zeros elsewhere. We denote with $\mathbf{0}_d \in \mathbb{R}^d$ and  $\vectOnes{d} \in \mathbb{R}^d$ 
the vectors having all the entries equal to zero and one, respectively, whereas $\mathbf{0}_{d_1 \times d_2} \in \mathbb{R}^{d_1 \times d_2}$ is the matrix having all the entries equal to zero.
Given $\mathbf{x} \in \mathbb{R}^d$, its Euclidean norm is referred as $\| \mathbf{x} \|$. We define the {\textit{orthogonal projection operator}} as 
\begin{equation}
\label{eq:orthogonal_projector}
\mathbf{P} \colon \mathbb{R}^d\rightarrow \mathbb{R}^{d\times d}, \quad\mathbf{P}\left( \mathbf{x} \right) = \mathbf{I}_{d} - \frac{\mathbf{x}}{\|\mathbf{x} \|} \frac{\mathbf{x}^{\transposed}}{\|\mathbf{x}\|},
\end{equation}
that maps any (non-zero) vector to its orthogonal complement.
Hence, $\mathbf{P}\left( \mathbf{x} \right) \mathbf{y}$ indicates the projection of $\mathbf{y}\in \mathbb{R}^d$ onto the orthogonal complement of $\mathbf{x} \in \mathbb{R}^d$. 
Given two vectors $\mathbf{x}, \mathbf{y} \in  \mathbb{R}^3$, their cross product is denoted as $\mathbf{x}\times \mathbf{y} = \skewMatrix{\mathbf{x}}\mathbf{y} = -\skewMatrix{\mathbf{y}}\mathbf{x}$, where the map $\skewMatrix{\cdot}\colon \mathbb{R}^3 \rightarrow \mathfrak{so}(3)$ associates each vector \mbox{$\mathbf{x} \in \mathbb{R}^3$} to the corresponding skew-symmetric matrix belonging to the Special Orthogonal algebra $\mathfrak{so}(3)$.
 
Given a matrix $\mathbf{A} \in \mathbb{R}^{p\times q}$, {$\mathbf{A}^\top \in \mathbb{R}^{q\times p}$ represents its \textit{transpose} and}  its \textit{null space} and \textit{image space} are denoted as $\Ker{\mathbf{A}}$ and $\Immagine{\mathbf{A}}$, respectively. The dimension of $\Immagine{\mathbf{A}}$ is indicated as $\rank{\mathbf{A}}$, whereas $\nullity{\mathbf{A}}$ stands for the nullity of the matrix, namely $\nullity{\mathbf{A}} = \text{dim}(\Ker{\mathbf{A}})$. For the well-known rank-nullity it is {$\rank{\mathbf{A}} = q -\nullity{\mathbf{A}}$. In addition, it holds that $\rank{\mathbf{A}}=\rank{\mathbf{A}^\top}$. In the rest of the paper }
we use the notation $\diag{\mathbf{A}_k} \in \mathbb{R}^{rp \times rq}$ to indicate the block diagonal matrix associated to the set $\left\lbrace \mathbf{A}_k \in \mathbb{R}^{p \times q}\right\rbrace_{k=1}^r$.  

Finally, given the function $f \colon \mathcal{X} \rightarrow \mathcal{Y}$ and the sets $\mathcal{A} \subset \mathcal{X}$, and $\mathcal{B}\subset \mathcal{Y}$, then $f\left(\mathcal{A}\right) = \left\lbrace f\left( x \right) \in \mathcal{Y}: x\in \mathcal{A} \right\rbrace$ is called the \textit{image} of $\mathcal{A}$ under $f$, and $f^{-1}\left( \mathcal{B} \right) = \left\lbrace x \in \mathcal{X}: f\left(x\right) \in \mathcal{B}\right\rbrace$ is called \textit{preimage} of $\mathcal{B}$ under $f$.


\section{{Bearing Rigidity: Definitions and Properties}}
\label{sec:MainDef}

In this section we introduce the main concepts related to bearing rigidity theory. 

\subsection{Framework Formation Model}
\label{sec:framework}

Consider a generic formation of $n\geq3$ agents, wherein each agent is associated to an element of the differential manifold {$\mathcal{D}_i \subseteq SE(3)$}, $i \in \{ 1 \ldots n\}$, describing its
configuration and motion constraints\footnote{{Here, we focus on real world scenarios, nonetheless the definitions and properties provided in the following are valid also for the case  $\mathcal{D}_i=\mathbb{R}^d$ with $d > 3$ discussed, e.g., in~\cite{2016-ZhaoZelazo-RigidityRd}.}}. {In detail, introducing a  global frame common (but not necessarily available) to all the agents in the group, the \textit{configuration} $\chi_i \in \mathcal{D}_i$ of the $i$-th agent coincides with its position when modeled as a {particle}, or with the pair of its position and (partial/full) attitude, i.e., with its (partial/full) pose, when the rigid body model is assumed.} 
{We now introduce the notion of a framework.}

\begin{defin}[Framework in $\bar{\mathcal{D}}$]
	\label{MainDef:def:Framework}
	A \emph{framework} embedded in the {product} differential manifold  $\bar{\mathcal{D}}= \prod_{i=1}^n \mathcal{D}_i {\subseteq SE(3)^n}$  is an ordered pair $\left(\mathcal{G}, \config \right)$ consisting of a (directed or undirected) graph $\mathcal{G}=\left(\mathcal{V}, \mathcal{E} \right)$ with $\vert \mathcal{V}\vert=n$, and a map $\chi : \mathcal{V} \to \bar{\mathcal{D}}$ such that $v_i \mapsto \chi(v_i):=\chi_i \in \mathcal{D}_i$.  We refer to $\config = \left(\chi_1 \dots \chi_n\right) \in  \bar{\mathcal{D}}$ as the \emph{formation configuration}.
\end{defin}

The framework model characterizes a formation in terms of the {agent configuration}, where{ $\chi$} can be thought of as an embedding of the graph into $\bar{\mathcal{D}}$, and the agents are associated with nodes in the graph.  {In the study of bearing rigidity, we are interested in the bearing vector between pairs of agents that are connected by an edge in $\mathcal{G}$.} Note that $\mathcal{G}$ can be directed or undirected. In rigidity theory, it is typically assumed that the graph is not time-varying, and we adopt this assumption here.

We introduce also the following definitions on formations.

\begin{defin}[Non-Colinear Formation]
\label{MainDef:def:colinear}
An $n$-agent formation modeled as a framework $\left(\mathcal{G}, \config \right)$ in $\bar{\mathcal{D}}$ is \emph{non-colinear} if all  agents are {in distinct positions} and do not lie along the same line in the global frame.
\end{defin}

\begin{defin}[Homogeneous Formation]
\label{MainDef:def:Homogeneous}
	An $n$-agent formation modeled as a framework $\left(\mathcal{G}, \config \right)$ in $\bar{\mathcal{D}}$ is \emph{homogeneous} if  $\mathcal{D}_i = \mathcal{D}$ $\forall i \in \{1 \ldots n\}$, hence $\bar{\mathcal{D}} = \mathcal{D}^n$. Otherwise, the formation is \emph{heterogeneous}. 
\end{defin}

{Hereafter, we focus on non-colinear formations, albeit the colinear case is discussed in Section~\ref{sec:colinear_cases}.  Note that, for a non-colinear homogeneous $n$-agent formation, the $(n \times d)$ matrix 
 describing the agents position in $\mathbb{R}^d$ is of rank greater than 1.
 
 Although the stated assumptions regard the agents state and motion constraints, for a given formation the bearing rigidity properties are related to the bearing vector between neighboring agents.} 
According to the framework model, every edge $e_k=e_{ij}=\left(v_i, v_j \right)\in \mathcal{E}$ ($\vert\mathcal{E} \vert = m$) represents a \textit{bearing measurement} $\mathbf{b}_k=\mathbf{b}_{ij}$ defined in the differential manifold {$\mathcal{M}_k\subseteq \mathbb{S}^2$} and recovered by the $i$-th agent which is able to sense the $j$-th agent, $i,j \in \{1 \ldots n\}, i \neq j$. The \textit{bearing measurements domain} can now be expressed as $\bar{\mathcal{M}}= \prod_{k=1}^m \mathcal{M}_k{\subseteq \mathbb{S}^{2m}}$. For homogeneous formations, it is $\bar{\mathcal{M}}= \mathcal{M}^m$ with $\mathcal{M}_k = \mathcal{M}, \forall k \in \{1 \ldots m\}$.
The available measurements can be expressed in the global frame or {in the local frame in-built with  each agent/node (and, thus, defined according to $\mathcal{D}_i$)}. However, in both cases, these are related to the framework configuration according to the following  
definition where an arbitrary edge labeling is introduced.

\begin{defin}[{Bearing Function}]
\label{MainDef:def:BR_function}
	Given an $n$-agent formation modeled as a framework $\left(\mathcal{G}, \config \right)$ in $\bar{\mathcal{D}}$, the {\emph{bearing function}} is the map {$\mathbf{b}_{\graphG} \colon \bar{\mathcal{D}} \rightarrow \bar{\mathcal{M}}$}
	associating the {formation configuration} $\config \in \bar{\mathcal{D}}$ to the vector $\bearingFunctionG{\config}=\colvec{ \mathbf{b}_{1}^{\transposed} \;\; \dots \;\; \mathbf{b}_{m}^{\transposed} }^\transposed \in \bar{\mathcal{M}}$, stacking all the available bearing measurements.	 
\end{defin}

Observe that the bearing function determines the \emph{shape} of the formation in terms of relative pose among all the agents.  One of the central questions in  bearing rigidity theory is if a given formation with its bearing function \emph{uniquely} defines the shape.  This will explored in the sequel.

Hereafter, the framework model is adopted to refer to an $n$-agent formation (implying $n\geq3$) and the two concepts (framework and formation) are assumed to be equivalent.

\subsection{Rigidity Properties of Static Frameworks}
\label{static_rigidity}
 
{Definition}~\ref{MainDef:def:BR_function} allows to introduce the first two notions related to bearing rigidity theory, namely the equivalence and the congruence of different frameworks.  

\begin{defin}[Bearing Equivalence]
	\label{MainDef:def:BE}
 		Two frameworks $\framework{\config}$ and $\framework{\config'}$ are \emph{bearing equivalent (BE)} if $\bearingFunctionG{\config} = \bearingFunctionG{\config'}$.
\end{defin}
\begin{defin}[Bearing Congruence]
	\label{MainDef:def:BC}
		Two frameworks $\framework{\config}$ and $\framework{\config'}$ are \emph{ bearing congruent (BC)} if $\bearingFunctionK{\config} = \bearingFunctionK{\config'}$, where $\mathcal{K}$ is the complete graph {associated to $\mathcal{G}$}.
\end{defin}

Accounting for the preimage under the bearing function, the set $\setEquivalence{\config} = \mathbf{b}_{\mathcal{G}}^{-1} \left( \mathbf{b}_{\mathcal{G}}\left(\config\right)\right) \subseteq \bar{\mathcal{D}}$ includes all the {formation configurations} $\config' \in \bar{\mathcal{D}}$ such that $\framework{\config'}$ is BE to $\framework{\config}$, while the set $\setCongruence{\config} =\mathbf{b}_{\mathcal{K}}^{-1} \left( \mathbf{b}_{\mathcal{K}}\left(\config\right)\right) \subseteq \bar{\mathcal{D}}$ contains all the {formation configurations} $\config'\in \bar{\mathcal{D}}$ such that $\framework{\config'} $ is BC to $\framework{\config}$. Trivially, it follows that $\setCongruence{\config}\subseteq \setEquivalence{\config}$.

The definition of these sets allows to introduce the (local and global) property of bearing rigidity.

\begin{defin}[Bearing Rigidity in $\bar{\mathcal{D}}$]
\label{MainDef:def:BR}	
	A framework $\framework{\config}$ is (locally) \emph{bearing rigid (BR)} in $\bar{\mathcal{D}}$ if there exists a neighborhood $\setNeigh{\config} \subseteq \bar{\mathcal{D}}$ of $\config$  such that 
		\begin{equation}
			\label{MainDef:eq:BR_condition}
			\setEquivalence{\config} \cap \setNeigh{\config} = \setCongruence{\config} \cap \setNeigh{\config}.
		\end{equation}
\end{defin}

\begin{defin}[Global Bearing Rigidity in $\bar{\mathcal{D}}$]
\label{MainDef:def:GBR}
	A framework $\left(\mathcal{G}, \config \right)$ is \emph{globally bearing rigid (GBR)} in $\bar{\mathcal{D}}$ if every framework which is BE to $\left( \mathcal{G}, \config \right)$ is also BC to $\left( \mathcal{G}, \config \right)$, \new{i.e.,} 
	$\setEquivalence{\config} = \setCongruence{\config}$.
\end{defin}

Figure~\ref{MainDef:fig:BR_Inclusions} provides a graphical interpretation of condition~\eqref{MainDef:eq:BR_condition} highlighting the relation between the sets $\setEquivalence{\config}, \setCongruence{\config}$ and $\setNeigh{\config}$. The {existence  of  a  neighborhood} in the configurations space is {not present} in {Definition}~\ref{MainDef:def:GBR} of global bearing rigidity. As a consequence, this property results to be stronger than the previous one as proved in the next theorem.

\begin{figure}[t!]
\centering
	\includegraphics[trim={0 2pt 0 2pt}, clip, width=0.75\columnwidth]{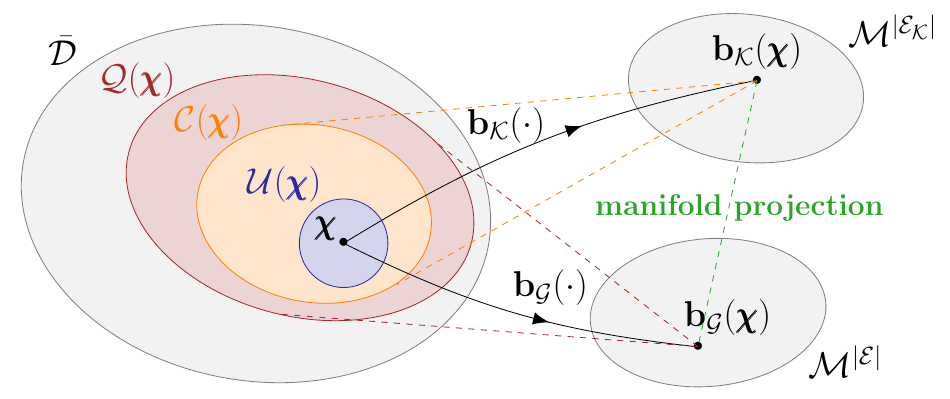}
	\caption{Graphical interpretation of condition~\eqref{MainDef:eq:BR_condition}.}
	\label{MainDef:fig:BR_Inclusions}
	\vspace{-0.2cm}
\end{figure}

{
\begin{prop}
\label{prop:GBR_BR}
	\textcolor{black}{A GBR framework $\framework{\config}$ is also BR.} 
\end{prop}}
\begin{proof}
For a GBR framework $\framework{\config}$, it holds that $\setEquivalence{\config} = \setCongruence{\config}$. Consequently, condition~\eqref{MainDef:eq:BR_condition} is valid for $\setNeigh{\config}= \bar{\mathcal{D}}$ demonstrating that the framework is BR. 
\end{proof}

\subsection{{Rigidity Properties of Dynamic Frameworks}}
\label{sec:dynamic_rigidity}

All the properties previously defined concern rigidity for \textit{static} frameworks. In real-world scenarios, however, agents are generally 
able to move with or within a formation, and, thus, to change their configuration.  Here, the motion constraints are captured by the configuration space $\mathcal{D}_i$ of each agent.  
For this reason, in this section we assume to deal with \textit{dynamic} agent formations, namely formations modeled as frameworks $\framework{\config}$ with fixed sensing graph $\mathcal{G}$ and $\config$ subject to a deformation implied by the variation of the single agent configuration. Formally, we consider the motion of an agent along a curve in its configuration space parameterized by a variable $t \in [0,1]$.  Thus, we have that $\chi_i=\chi_i(t) \in \mathcal{D}_i$ and $\config=\config(t)=\{\chi_1(t) \ldots \chi_n(t) \} \in \bar{\mathcal{D}}$. 

\new{In this context, we introduce the \textit{variation} $\infMotion_i$ defined in the $i$-th agent \textit{variations domain} $\mathcal{I}_i$, depending on the $i$-th agent {motion constraints}. In particular, we assume that  $\frac{d\chi_i(t)}{dt}= f_i(\chi_i(t),\infMotion_i)$  where 
$f_i \colon {\mathcal{D}}_i  \times {\mathcal{I}}_i \rightarrow {\mathcal{D}}_i$ is a continuous function.} Accounting for the whole formation, we can stack all $\infMotion_i$ into a vector $\infMotion \in  \bar{\mathcal{I}}$  where $\bar{\mathcal{I}}= \prod_{i=1}^n \mathcal{I}_i$  is the \textit{variations domain}. Note that, for homogeneous formations, we have $\mathcal{I}_i = \mathcal{I}$, and thus $\bar{\mathcal{I}}= \mathcal{I}^n$.
Hereafter, we interpret the variable $t$ as a \emph{time} variable and $\infMotion$ as the vector of commands acting on the formation to attain the desired evolution from an initial formation configuration $\chi(0)$ to a final one $\chi(1)$.

Given this setup, our aim is to identify the conditions
under which a given dynamic formation deforms while maintaining its rigidity features, i.e., preserving the existing bearings among the agents over time. 
The
relation between ${\infMotion}$ and  the derivative of the {bearing function}, clarified in the next definition, constitutes the starting point for the study of the formation rigidity properties.

\begin{defin}[Bearing Rigidity Matrix]
	\label{MainDef:def:BR_matrix}
	For a given (dynamic) framework $\left(\mathcal{G}, \chi(t) \right)$, the \emph{bearing rigidity matrix} is the matrix $ \mathbf{B}_{\mathcal{G}}(\chi(t)) $ that satisfies the relation
	\begin{equation}
	\label{MainDef:eq:BR_matrix}
	\dot{\mathbf{b}}_{\mathcal{G}}({\config(t)}) =
		\frac{d}{dt}\mathbf{b}_{\mathcal{G}}(\chi(t))= \mathbf{B}_{\mathcal{G}}(\chi(t)) {\infMotion}.
	\end{equation}
\end{defin}
{Indeed, $\mathbf{B}_{\mathcal{G}}(\chi(t))$ can be interpreted as a Jacobian matrix within a differential geometry perspective,} 
whose dimensions depend on the spaces $\bar{\mathcal{M}}$ and $\bar{\mathcal{I}}$.
Nevertheless, one can observe that the null space of $\mathbf{B}_{\mathcal{G}}(\chi(t)) $ always identifies all the (first-order) deformations of $\chi(t)$ that keep the bearing measurements unchanged, following from the Taylor series expansion of the {bearing function}. 
From a physical perspective, such variations of $(\mathcal{G},\chi(t))$ can be interpreted as sets of command inputs to provide to the agents to drive the formation from {an} initial configuration $\config(0)$ to a final one $\config(1)$ belonging to the set $\setEquivalence{\config(0)}$ of equivalent formation configurations.  Note that the existence of such paths further implies a smooth mapping from $\config(0)$ to $\config(1)$, and in this sense, $\infMotion$ can be interpreted as an instantaneous velocity \cite{asimow1979rigidity}. 

\begin{defin}[Infinitesimal Variation]
	\label{MainDef:def:Variation_Infinitesimal} 
	{For a given (dynamic) framework $\left(\mathcal{G},\chi(t) \right)$, a variation $\infMotion \in \bar{\mathcal{I}}$ is  \emph{infinitesimal} if and only if $\infMotion \in\Ker{\mathbf{B}_{\mathcal{G}}(\chi(t))}$.}
\end{defin}

It directly results from~\eqref{MainDef:eq:BR_matrix} that an infinitesimal variation preserves the bearing measurements among all interacting agents.
For a given $(\mathcal{G},\chi(t))$, there may be many infinitesimal variations.  However, there exist infinitesimal variations that hold for \emph{any} graph.  This follows from the next result.
\begin{thm}
	\label{MainDef:thm:Inclusion_KerB_K_KerB_G}
	Given a dynamic framework $(\mathcal{G},\chi(t))$, and denoting as $\mathcal{K}$ the complete graph {associated to~$\mathcal{G}$}, it holds that $\Ker{\mathbf{B}_{\mathcal{K}}(\chi(t))} \subseteq \Ker{\mathbf{B}_{\mathcal{G}}(\chi(t))}$.
\end{thm}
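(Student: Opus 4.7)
The plan is to exploit the edge inclusion $\mathcal{E}\subseteq \mathcal{E}_{\mathcal{K}}$ to show that $\mathbf{B}_{\mathcal{G}}(\chi(t))$ is obtained (up to a row permutation) by selecting a subset of the rows of $\mathbf{B}_{\mathcal{K}}(\chi(t))$. Once this row-selection relationship is established, any instantaneous variation annihilating the full matrix must a fortiori annihilate any subset of its rows, yielding the claimed kernel inclusion.

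Concretely, I would first recall from Def.~\ref{MainDef:def:BR_function} that the bearing rigidity function stacks one block $\mathbf{b}_k \in \mathcal{M}$ per edge $e_k\in\mathcal{E}$. Since every edge of $\mathcal{G}$ is also an edge of $\mathcal{K}$, there exists a (block) row-selection matrix $\mathbf{S}\in\{0,1\}^{m\times m_{\mathcal{K}}}$ (with $m_{\mathcal{K}}=|\mathcal{E}_{\mathcal{K}}|$), expanded via Kronecker product to act on the per-edge blocks, such that
\begin{equation*}
\mathbf{b}_{\mathcal{G}}(\chi(t))=\bar{\mathbf{S}}\,\mathbf{b}_{\mathcal{K}}(\chi(t)),
\end{equation*}
where $\bar{\mathbf{S}}=\mathbf{S}\otimes\mathbf{I}$ acts on the stacked bearings. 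Differentiating both sides with respect to time and invoking Def.~\ref{MainDef:def:BR_matrix} applied to both $\mathcal{G}$ and $\mathcal{K}$, I obtain
\begin{equation*}
\mathbf{B}_{\mathcal{G}}(\chi(t))\,\boldsymbol{\delta}(t)=\bar{\mathbf{S}}\,\mathbf{B}_{\mathcal{K}}(\chi(t))\,\boldsymbol{\delta}(t)
\end{equation*}
for every instantaneous variation $\boldsymbol{\delta}(t)\in\bar{\mathcal{I}}$. Because this identity holds on all of $\bar{\mathcal{I}}$, it forces $\mathbf{B}_{\mathcal{G}}(\chi(t))=\bar{\mathbf{S}}\,\mathbf{B}_{\mathcal{K}}(\chi(t))$.

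The conclusion is then immediate: if $\boldsymbol{\delta}(t)\in\Ker{\mathbf{B}_{\mathcal{K}}(\chi(t))}$, then $\mathbf{B}_{\mathcal{K}}(\chi(t))\,\boldsymbol{\delta}(t)=\mathbf{0}$, and premultiplying by $\bar{\mathbf{S}}$ gives $\mathbf{B}_{\mathcal{G}}(\chi(t))\,\boldsymbol{\delta}(t)=\mathbf{0}$, so $\boldsymbol{\delta}(t)\in\Ker{\mathbf{B}_{\mathcal{G}}(\chi(t))}$. The only mildly delicate point is justifying the row-selection identity cleanly: one must be explicit that the bearing associated with an edge $e_k\in\mathcal{E}$ is precisely the bearing that $\mathcal{K}$ assigns to the same ordered pair of vertices, so that the per-edge block $\mathbf{b}_k$ appearing in $\mathbf{b}_{\mathcal{G}}$ is literally the same block appearing in $\mathbf{b}_{\mathcal{K}}$. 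This is a direct consequence of Def.~\ref{MainDef:def:BR_function}, which ties $\mathbf{b}_k$ to the ordered pair $(v_i,v_j)$ alone, independently of which ambient graph one works in; hence the main obstacle is purely notational rather than conceptual.
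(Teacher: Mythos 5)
Your proposal is correct and follows essentially the same route as the paper's proof: both arguments rest on the observation that $\mathcal{E}\subseteq\mathcal{E}_{\mathcal{K}}$, so the rows (equivalently, the scalar constraints $\mathbf{B}_{\mathcal{G}}(\chi(t))\,\boldsymbol{\delta}(t)=\mathbf{0}$) of $\mathbf{B}_{\mathcal{G}}(\chi(t))$ form a subset of those of $\mathbf{B}_{\mathcal{K}}(\chi(t))$, which immediately gives $\Ker{\mathbf{B}_{\mathcal{K}}(\chi(t))}\subseteq\Ker{\mathbf{B}_{\mathcal{G}}(\chi(t))}$. Your explicit row-selection matrix $\bar{\mathbf{S}}$ is merely a more formal packaging of the paper's ``subset of equations'' statement, and the attention you pay to the per-edge blocks coinciding across the two graphs is a reasonable (if minor) tightening of the same idea.
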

\begin{proof}
Since each edge of the graph $\mathcal{G}$ belongs to the graph $\mathcal{K}$, the equations set {$\mathbf{B}_{\mathcal{G}}(\chi(t)){\infMotion} = \mathbf{0}_\mu$} constitutes a subset of the equations set {$\mathbf{B}_{\mathcal{K}}(\chi(t)) {\infMotion} = \mathbf{0}_\mu$} {with $\mu$ depending on $\bar{\mathcal{M}
}$}. Then ${\infMotion} \in \Ker{\mathbf{B}_{\mathcal{K}}(\chi(t))}$ implies ${\infMotion} \in \Ker{\mathbf{B}_{\mathcal{G}}(\chi(t))}$.
\end{proof}

In light of {Theorem}~\ref{MainDef:thm:Inclusion_KerB_K_KerB_G}, we introduce the notion of \emph{trivial variations} by considering the infinitesimal variations related to the complete graph $\mathcal{K}$. These ensure the measurements preservation for each pair of nodes in the formation, i.e., the formation shape preservation in terms of relative poses among the agents. 

\begin{defin}[Trivial Variation]
	\label{MainDef:def:Variation_Trivial}
	For a given (dynamic) framework $\left(\mathcal{G}, \chi(t) \right)$, a  variation $\infMotion \in \bar{\mathcal{I}}$ is \emph{trivial} if and only if $\infMotion \in \Ker{\mathbf{B}_{\mathcal{K}}(\chi(t))}$, where $\mathbf{B}_{\mathcal{K}}(\chi(t) )$ is the bearing rigidity matrix computed for the complete graph $\mathcal{K}$ associated to $\mathcal{G}$.
\end{defin}

Theorem~\ref{MainDef:thm:Inclusion_KerB_K_KerB_G} 
is fundamental for the next definition that constitutes a key concept in rigidity theory.

\begin{defin}[Infinitesimal Bearing Rigidity in $\bar{\mathcal{D}}$]
	\label{MainDef:def:IBR}
	A (dynamic) framework $\left(\mathcal{G}, \chi(t) \right)$ is  \emph{infinitesimally bearing rigid (IBR)} in $\bar{\mathcal{D}}$ if 
$	\Ker{\mathbf{B}_{\mathcal{G}}(\chi(t))} = \Ker{\mathbf{B}_{\mathcal{K}}(\chi(t))}$.
	Otherwise, it is \emph{infinitesimally bearing flexible (IBF)}. 
\end{defin}

\new{From a physical point of view, }a framework $\left(\mathcal{G},\chi(t) \right)$ is IBR if its \textit{trivial variations set} $\mathcal{S}_t :=  \Ker{\mathbf{B}_{\mathcal{K}}(\chi(t))} $ coincides with its \textit{infinitesimal variations set} $\mathcal{S}_i := \Ker{\mathbf{B}_{\mathcal{G}}(\chi(t))}$.
\new{A variation in the set $\mathcal{S}_i$ ($\mathcal{S}_t$)
induces a deformation of the configuration $\chi(0)$ into $\chi(1)$ that is bearing equivalent (congruent) to the initial one.} 
\new{Thus, for an IBF framework, there exists at least a variation that {deforms} the configuration $\config(0)$ to $\chi(1)\in \setEquivalence{\config(0)}\setminus\setCongruence{\config(0)}$.} 

In the rest of the paper, we  limit our analysis to the dynamic framework case, and whenever it is possible, the time dependency is dropped out to simplify the notation.

\section{Unified Rigidity Theory}
\label{sec:unified_view}


Many of the existing works on the bearing rigidity begin their analysis with a statement of the agent configuration space.  The rigidity results are then derived, not in generic terms, but explicitly as a function of the chosen configuration space. A consequence of this approach is the need to rederive and redefine rigidity concepts. 
In this section, we show that stemming from the general setup given in Section \ref{sec:MainDef}, we are able to unify the study of bearing rigidity that holds for any $\bar{\mathcal{D}}$ inside $SE(3)^n$. 
The main realization is that any agent can be interpreted as a rigid body acting in 3D space with constraints on its motions (i.e., constrained to move on a differential manifold inside $SE(3)^n$). 
This approach leads to a general and constructive form for the rigidity matrix and a necessary and sufficient condition relating infinitesimal bearing rigidity to the rank of this matrix.



We consider a formation where each $i$-th agent, $i \in\{ 1 \ldots n\}$,  is associated to a set of bearing vectors related to its neighbors defined by the graph $\mathcal{G}$. An agent can vary its configuration  within $\mathcal{D}_i$ consisting of  $c_i^t \in \mathbb{N}$ translational dofs (tdofs) and $c_i^r \in \mathbb{N}$ rotational dofs (rdofs), where $c_i = c_i^t+c_i^r$ is the dimension of the differential manifold $\mathcal{D}_i$.
In this work we consider $\mathcal{D}_i$ in $SE(3)$, so $c_i^t$ and $c_i^r$ are limited in $[0, 3]$.

Independently of $c_i$, each agent in the group can be modeled as a rigid body and associated to a \textit{local reference frame} $\mathscr{F}_i$ whose origin $O_i$ coincides with the agent center of mass. At each time instant $t \geq 0$, it is thus possible to describe the {pose} of the agent in the 3D space through the pair $(\mathbf{p}_i(t),\mathbf{R}_i(t))\in SE(3)$, where~the~vector $\mathbf{p}_i(t)= \colvec{p_i^x(t) &p_i^y(t) & p_i^z(t)}^\top \in \mathbb{R}^3$ identifies the position of $O_i$ in the {global} frame $\worldframe$ and the matrix $\mathbf{R}_i(t)\in SO(3)$ defines the orientation of $\mathscr{F}_i$ {with respect to} $\worldframe$. 
In particular, supposing that the (unit) vectors $\mathbf{e}_h \in \mathbb{S}^2$, $h \in \{1,2,3\}$ identify the axes of the global frame, we assume that {$\mathbf{R}_i(t)= \mathbf{R}\left(\{\theta_{i,h}(t),  \mathbf{e}_h\}_{h=1}^{3}\right)$ meaning that $\mathbf{R}_i(t)$ results from the composition of three consecutive rotations, each of them performed around $\mathbf{e}_h$ of an angle $\theta_{i,h}(t)$, according to a suitable sequence.}\footnote{{This reasoning remains valid for any representation of 3D rotations.}}
The parameter $c_i$, on the other hand, is intrinsically related to the dimension of  $ \mathcal{D}_i$, and $\chi_i$  may not necessarily coincide with the whole pair $(\mathbf{p}_i(t),\mathbf{R}_i(t))$. Specifically, when $c_i<6$, the agent can vary its pose in 3D space only \textit{partially}.

\begin{table}[t!]
\centering
\resizebox{0.49\textwidth}{!}{
\renewcommand{\arraystretch}{1.1}
\begin{tabular}{|c|c|c|c|c|}
\hline
{$\mathcal{D}$} & $\mathbf{p}_i$ & $\mathbf{R}_i$ & {$\mathbf{U}_{ij}$} & {$\mathbf{V}_{ij}$} \\ \hline\hline
$SE(3)$ & $\colvec{p_i^x &p_i^y & p_i^z}^\top$ & {$ \mathbf{R}\left(\{\theta_{i,h}(t),  \mathbf{e}_h\}_{h=1}^{3}\right)$} & $\mathbf{I}_3$ & $\mathbf{I}_3$ \\ 
 \hline \hline 
 $\mathbb{R}^3\times \mathbb{S}^1$ & $\colvec{p_i^x &p_i^y & p_i^z}^\top$ &  $\mathbf{R}\left({\theta_i}(t),\mathbf{v}\right), \mathbf{v}={\footnotesize \displaystyle {\sum_{h=1}^3}} v_h \mathbf{e}_h$ & $\mathbf{I}_3$ & $[\mathbf{0}_{3\times 2} \; \mathbf{v}]$ \\ 
$\mathbb{R}^2\times \mathbb{S}^1$ & $\colvec{p_i^x &p_i^y & 0}^\top$ &  $\mathbf{R}\left({\theta_i}(t),\mathbf{e}_3\right)$ & $\colvec{\mathbf{e}_1 \; \mathbf{e}_2 \; \mathbf{0}_3}$ & $[\mathbf{0}_{3\times 2} \; \mathbf{e}_3]$ \\  
 \hline\hline 
$\mathbb{R}^3$ & $\colvec{p_i^x &p_i^y & p_i^z}^\top$ &  $\mathbf{I}_3$ & $\mathbf{I}_3$ & $\mathbf{0}_{3 \times 3}$ \\ 
$\mathbb{R}^2$ & $\colvec{p_i^x &p_i^y & 0}^\top$ & $\mathbf{I}_3$ & $\colvec{\mathbf{e}_1 \; \mathbf{e}_2 \; \mathbf{0}_3}$ & $\mathbf{0}_{3 \times 3}$ \\ 
\hline
\end{tabular}
}
\caption{Particularization of the structure of the {extended bearing matrix} in~\eqref{eq:General_Bearing_Matrix} for the {differential manifolds} considered in {{Section}~\ref{sec:realization}}.}
\label{tab:generic_rigidity_matrix}
\vspace{-0.2cm}
\end{table}

In light of Section~\ref{sec:MainDef}, the described formation can be modeled as a framework in  $\bar{\mathcal{D}}\subseteq SE(3)^n$. 
Under the 
assumption that agents do not have access to a global frame,  $\graphG$  is a \textit{directed} graph encoding that bearings are inherently expressed in the local frames and are not necessarily reciprocal between pair of agents. %
Hence, the directed edge $e_k = \left(v_i, v_j \right) \in \mathcal{E}$ refers to the bearing of the $j$-th agent obtained by the $i$-th agent. This can be expressed in terms of the relative position and orientation of the agents in $\mathscr{F}_W$, namely
\begin{align}
\label{eq:Bearing_Measure}
\mathbf{b}_k(t) = \mathbf{b}_{ij}(t) &=  \mathbf{R}_{i}^{\transposed}(t) s_{ij}(t) \vectp{i}{j}(t) =   \mathbf{R}_{i}^{\transposed}(t) \vectpNorm{i}{j}(t),
\end{align}
where $\vectp{i}{j}(t)=\mathbf{p}_j(t)- \mathbf{p}_i(t)\in \mathbb{R}^{d}$ is the relative position vector, and $s_{ij}(t)=\| \vectp{i}{j}(t)\|^{-1}\in\mathbb{R}$ is the inverse of the relative distance between the $i$-th and $j$-th agent.

To treat in a unified way multiple domains, we can consider the embedding\footnote{{Hereafter, a superscript $+$ is used to highlight the vectors defined in the lifted spaces.}} of each ${\cal D}_i$ into the $SE(3)$ manifold, thus considering the given formation as a framework $(\mathcal{G},\chi(t))$ in $SE(3)^n$.
Observing~\eqref{eq:Bearing_Measure}, we embed $ \bar{\mathcal{M}}$ into $\mathbb{S}^{2m}$ and, according to Definition~\ref{MainDef:def:BR_function}, the {bearing function} can be expressed as
\begin{equation}
\label{eq:Bearing_Function}
    {\mathbf{b}_\mathcal{G}^+(\chi(t))} = \diag{s_{ij}(t)\mathbf{R}_{i}^{\transposed}(t)}\incidenceExp^{\transposed} \vectPos(t) \in \mathbb{S}^{2m},
\end{equation}
where $\mathbf{p}(t) = \colvec{ \mathbf{p}_1^\top(t)  \dots  \mathbf{p}_n^\top(t) }^\top \in \mathbb{R}^{3n}$ stacks all the agent position vectors. 
Consistently, the variations domain $\bar{\cal I}$ can be embedded in $\mathbb{R}^{6n}$ and \new{thus the vector  $\infMotion$ can be substituted by}
\begin{align}
\label{eq:General_Variation}
{\infMotion^+} = \colvec{\infMotionPos^{\transposed}  & \infMotionAtt^{\transposed}
}^{\transposed} \in \mathbb{R}^{6n},
\end{align}
\new{where $\infMotionPos \in \mathbb{R}^{3n}$ and $\infMotionAtt \in \mathbb{R}^{3n}$ are defined by padding with zeros the corresponding components of $\infMotion$ related to the possible variations of the agents position and attitude, respectively.}

{
This allows to introduce the following definition.
\begin{defin}[Extended Bearing Rigidity Matrix]
	\label{def:ex_BR_matrix}
	For a given framework $\left(\mathcal{G}, \chi(t) \right)$ embedded in $SE(3)^n$, the \emph{extended bearing rigidity matrix} is the matrix $\mathbf{B}^+_\mathcal{G}(\chi(t))\in \mathbb{R}^{3m \times 6n}$ that satisfies the relation
	\begin{equation}
	\label{MainDef:eq:ex_BR_matrix}
	\dot{\mathbf{b}}_{\mathcal{G}}^+({\config(t)}) =
		\frac{d}{dt}\mathbf{b}_{\mathcal{G}}^+(\chi(t))= \mathbf{B}_{\mathcal{G}}^+(\chi(t)) {\infMotion}^+.
	\end{equation}
\end{defin}
}

\new{Note that since $\mathbf{b}_\mathcal{G}(\chi(t))^+$ can be interpreted as the zero-padded version of the vector $\mathbf{b}_\mathcal{G}(\chi(t))$ in~\eqref{eq:Bearing_Measure}, the relation~\eqref{MainDef:eq:ex_BR_matrix} 
corresponds to~\eqref{MainDef:eq:BR_matrix} when accounting for the embedding of $\mathcal{D}_i$ in $SE(3)$. The consistency between $\mathbf{B}_{\mathcal{G}}(\chi(t))$ and $\mathbf{B}_{\mathcal{G}}(\chi(t))^+$ is guaranteed by the emergence of zero rows in the latter. Along the same line,  we also observe that~\eqref{eq:General_Variation} induces a partition of $\mathbf{B}^+_\mathcal{G}(\chi(t))$  into two blocks, distinguishing between the bearing variations due to the variations of the agents position and orientation, as formalized in the next proposition.}

\begin{prop}
\label{prop:rigidity_matrix} 
The extended bearing rigidity matrix in Definition~\ref{def:ex_BR_matrix} can be expressed as
\begin{align}
\label{eq:General_Bearing_Matrix}
\mathbf{B}^+_{\mathcal{G}}(\chi(t)) &= \colvec{ \mathbf{D}_p(t)\mathbf{U} \bar{\mathbf{E}}^\top  & \mathbf{D}_{{a}}(t) \mathbf{V}\bar{\mathbf{E}}_{o}^\top} \in \mathbb{R}^{3m \times 6n},
\end{align}
where
\begin{itemize}
    \item  $\mathbf{D}_p(t), \mathbf{D}_{{a}}(t) \in \mathbb{R}^{3m \times 3m}$ are derived from the orthogonal projections of relative position and attitude 
\begin{align}
 \mathbf{D}_p(t)&= \diag{s_{ij}(t) \mathbf{R}_i^\top(t)\mathbf{P}\left(\vectpNorm{i}{j}(t) \right) },
  \label{eq:General_Bearing_Matrix_middle} \\ 
 \mathbf{D}_{{a}}(t)&=  -\diag{ \mathbf{R}_i^{\transposed}(t)\skewMatrix{\vectpNorm{i}{j}(t)}},
 \label{eq:General_Bearing_Matrix_end}
\end{align}
\item 
$\mathbf{U}=\diag{\mathbf{U}_{ij}}\in \mathbb{R}^{3m \times 3m}$ and $\mathbf{V}=\diag{\mathbf{V}_{ij}}\in \mathbb{R}^{3m \times 3m}$ take into account the (time-invariant) matrices  $\mathbf{U}_{ij}, \mathbf{V}_{ij} \in \mathbb{R}^{3 \times 3}$  defining, respectively, the translational directions of the bearing measurement $\mathbf{b}_{ij}$ and $i$-th and $j$-th agents rotation directions in the 3D space with respect to the $i$-th agent frame;
\item 
$\incidenceExp, \incidenceOutExp\in \mathbb{R}^{3n\times 3m}$ are derived from the (time-invariant) incidence matrix of the graph $\graphG$.
\end{itemize}
\end{prop}

\begin{proof}
According to Definition~\ref{def:ex_BR_matrix}, 
the extended bearing rigidity matrix  $\mathbf{B}^+_\mathcal{G}(\chi(t))$ has to map 
the vector $\infMotion^+\in\mathbb{R}^{6n}$ to the derivative of the bearing function. 
\new{By applying the product rule to~\eqref{eq:Bearing_Function}, 
it follows that
\begin{align}
\label{eq:b_plus_derivative}
 \dot{\mathbf{b}}_{\mathcal{G}}^+({\config(t)}) &=  \left(\frac{d}{dt}\diag{s_{ij}(t)}\right) \diag{\mathbf{R}_{i}^{\transposed}(t)}\incidenceExp^{\transposed} \vectPos(t) \\
 &+ \diag{s_{ij}(t)} \left(\frac{d}{dt}\diag{\mathbf{R}_{i}^{\transposed}(t)}\right) \incidenceExp^{\transposed} \vectPos(t)  \nonumber\\
 &+ \diag{s_{ij}(t)} \diag{\mathbf{R}_{i}^{\transposed}(t)}  \incidenceExp^{\transposed} \frac{d}{dt}\vectPos(t). \nonumber
\end{align}
In light of~\eqref{eq:General_Variation}, it is possible to distinguish the measurements variations induced by the variations of the agents position $\infMotionPos$ or attitude $\infMotionAtt$:  the former contribution refers to the first and third addendum in~\eqref{eq:b_plus_derivative}, while the latter contribution is related to the second one.
Accounting for the agents dynamics when embedded in $SE(3)$,~\eqref{eq:b_plus_derivative} can then be rearranged as
\begin{align}
\label{eq:b_plus_derivative_2}
 \dot{\mathbf{b}}_{\mathcal{G}}^+({\config(t)}) &=  \mathbf{D}_p(t)\mathbf{U} \bar{\mathbf{E}}^\top \infMotionPos  + \mathbf{D}_{{a}}(t) \mathbf{V}\bar{\mathbf{E}}_{o}^\top \infMotionAtt,
\end{align}
where $\mathbf{U} \bar{\mathbf{E}}^\top, \mathbf{V}\bar{\mathbf{E}}_{o}^\top \in \mathbb{R}^{3m \times 3n}$ link the bearing measurements variations to the agents variations.
%
In detail, the first addendum in~\eqref{eq:b_plus_derivative_2} includes
the product of three matrices: the matrix $\bar{\mathbf{E}}^\top$ translates the variations of the agents position into relative positions (which are scaled bearings),
the block matrix $\mathbf{U}$ embeds the variation of each bearing in the manifold $\mathbb{S}^{2}$ and, finally, the block matrix $\mathbf{D}_p(t)$ accounts for the projection of each bearing $\mathbf{b}_{ij}$ in the $i$-th agent local frame. An analogous reasoning can be carried out for the second addendum in~\eqref{eq:b_plus_derivative_2} where $\mathbf{D}_a(t)$ acts similarly to $\mathbf{D}_p(t)$ since the skew matrix of the relative position vector is an orthogonal matrix.}
\end{proof}

{For any framework embedded in $\bar{\mathcal{D}}\subseteq SE(3)^n$, Proposition~\ref{prop:rigidity_matrix}, thus, provides a construction method and a general structure for the bearing rigidity matrix that can be decomposed into a part related to the formation configuration (the matrices $\mathbf{D}_p(t)$ and $\mathbf{D}_{{a}}(t)$), and a part related to the graph (the matrices $\incidenceExp$ and  $\incidenceOutExp$).  While not the focus of this work, this structure may also assist in understanding the combinatorial interpretation of certain rigidity properties. We then observe that the structure of $\mathbf{U}$ and $\mathbf{V}$ induces a level of sparsity in the extended bearing matrix, which may result in the presence of some  null columns and null rows.} 

\new{The embedding of $\bar{\mathcal{D}}$ in $SE(3)^n$ proposed in this section implies two facts. On one hand, both the infinitesimal and the trivial variations sets can be lifted from 
$\bar{\mathcal{I}}$ into $\mathbb{R}^{6n}$, thus considering the corresponding sets $\mathcal{S}_i^+$ and $\mathcal{S}_t^+$ such that $\vert \mathcal{S}_i^+ \vert = \vert \mathcal{S}_i \vert=q_i$ and  $ \vert \mathcal{S}_t^+ \vert = \vert \mathcal{S}_t \vert=q_t$.}
\new{These new sets are related to the null space of the matrices $\mathbf{B}^+_\mathcal{G}(\chi(t))$ and $\mathbf{B}^+_\mathcal{K}(\chi(t))$. Indeed, we first observe that $\mathcal{S}_i^+ \subset \ker{(\mathbf{B}^+_\mathcal{G}(\chi(t)))}$ and  $\mathcal{S}_t^+ \subset \ker{(\mathbf{B}^+_\mathcal{K}(\chi(t)))} $ since the extended bearing rigidity matrix can differ from the rigidity matrix because of the emergence of zero columns in correspondence of  zero entries of the vector $\infMotion^+$. This fact also justify the presence of other vectors in the null space of both $\mathbf{B}^+_\mathcal{G}(\chi(t))$ and $\mathbf{B}^+_\mathcal{K}(\chi(t))$: these are characterized by zero and non-zero entries in correspondence to non-zero and zero entries of $\infMotion^+$. In addition, because   $\mathbf{B}^+_\mathcal{K}(\chi(t))$ includes additional rows as compared to $\mathbf{B}^+_\mathcal{G}(\chi(t))$ while the structure of the two matrices is the same in terms of zero columns, we conclude that $\ker{(\mathbf{B}^+_\mathcal{G}(\chi(t)))} = \mathcal{S}_i^+ \oplus \mathcal{S}_v^+$ and $\ker{(\mathbf{B}^+_\mathcal{K}(\chi(t)))} = \mathcal{S}_t^+ \oplus \mathcal{S}_v^+$.
The set $\mathcal{S}_v^+$, whose cardinality $q_v = \vert \mathcal{S}_v^+ \vert$ corresponds to the number of null columns in $\mathbf{B}^+_\mathcal{G}(\chi(t))$, represents the set of the \emph{virtual variations} of the  evaluated formation and it includes the command inputs inducing variations of the agent configuration that do not affect the bearing measurements but that are also not allowed by the physical constraints on the agents dynamics. On the other hand, 
since also the measurements domain $\bar{\mathcal{M}}$ is lifted into $\mathbb{S}^{2m}$, $q_m$ null rows may characterize $\mathbf{B}_\mathcal{G}^+ (\chi(t))$ in correspondence to the the zero entries added to the vector  $\mathbf{b}_\mathcal{G}(\chi(t))$ in order to derive $\mathbf{b}_\mathcal{G}(\chi(t))$}. 
%

We finally provide a rank condition on the extended bearing rigidity matrix  that guarantees the infinitesimal rigidity of the corresponding framework embedded in any  $\bar{\mathcal{D}}$.

{
\begin{thm}
\label{thm:unified_condition_IBR}
A {non-colinear} $n$-agent formation modeled as a framework $(\mathcal{G}, \chi(t))$ in an arbitrary differential manifold $\bar{\mathcal{D}}$ is IBR if and only if
$\rank{\mathbf{B}^+_\mathcal{G}(\chi(t))} = \rank{\mathbf{B}^+_\mathcal{K}(\chi(t))}$.
\end{thm}
\begin{proof}
Because $ \rank{\mathbf{B}_{\mathcal G}^+} = 6n - q_v - q_i$ and $\rank{\mathbf{B}_{\mathcal K}^+} = 6n - q_v - q_t$, it holds that $\rank{\mathbf{B}^+_\mathcal{G}(\chi(t))} = \rank{\mathbf{B}^+_\mathcal{K}(\chi(t))}$ if and only if $q_i=q_t$. Due to Theorem~\ref{MainDef:thm:Inclusion_KerB_K_KerB_G}, the last equivalence is guaranteed if and only if $	\Ker{\mathbf{B}_{\mathcal{G}}(\chi(t))} = \Ker{\mathbf{B}_{\mathcal{K}}(\chi(t))}$, i.e., the framework is IBR.
\end{proof}
}
{

 \begin{table*}[t!]
 \centering
 \renewcommand{\arraystretch}{1.1}
 \resizebox{.85\textwidth}{!}{
 \begin{tabular}{|c|c|c|c|c|cc|c|}
 \cline{1-8}
 \multicolumn{4}{|c|}{$i$-th agent properties} & \multicolumn{4}{c|}{$n$-agent formation properties}\\
\cline{1-8}
$\mathcal{D}$ & $\chi_i$ & $\mathcal{M}$ & $\mathbf{b}_{ij}$ &  $\bar{\mathcal{I}}$ & \multicolumn{2}{c|}{$\infMotion$} & IBR condition \\
\hline \hline 
\multirow{2}{*}{$\mathbb{R}^{d} \quad d \in \{ 2, 3\}$} & $\mathbf{p}_i \in \mathbb{R}^{d}$  & \multirow{2}{*}{$\mathbb{S}^{d-1}$} & \multirow{2}{*}{$\vectpNorm{i}{j}$} & \multirow{2}{*}{$\mathbb{R}^{dn}$} &  \multicolumn{2}{c|}{\multirow{2}{*}{$\infMotionPos = \colvec{
\dot{\mathbf{p}}_1^{\transposed} & \dots  &\dot{\mathbf{p}}_n^{\transposed}
}^{\transposed}$}} & \multicolumn{1}{c|}{\multirow{2}{*}{$\rank{\bearingMatrixK} = dn-d-1$}} \\
& ($d$ tdofs + 0 rdofs) &  &  & & & & \\
\hline \hline
\multirow{2}{*}{$\mathbb{R}^{2}\times \mathbb{S}^1$} & $\mathbf{p}_i \in \mathbb{R}^{2}, {\theta_i} \in \left[0,2 \pi \right)$ & \multirow{2}{*}{$\mathbb{S}^1$} & \multirow{2}{*}{$\mathbf{R}_{i}^{\transposed} \vectpNorm{i}{j}$} &  \multirow{2}{*}{$\mathbb{R}^{3n}$} & \multirow{2}{*}{$\infMotion = \colvec{
\infMotionPos^{\transposed} & \infMotionAtt
}^{\transposed}$} & $
\infMotionPos = \colvec{
\dot{\mathbf{p}}_1^{\transposed} \  \dots  \ \dot{\mathbf{p}}_n^{\transposed} 
}^{\transposed}$ & \multirow{2}{*}{$\rank{\bearingMatrixK} = 3n-4$} \\
&  (2 tdofs + 1 rdofs) &  & & & & $
\infMotionAtt = \colvec{
{\dot{\theta}_1}^{\transposed} \  \dots  \ {\dot{\theta}_n}^{\transposed} 
}^{\transposed}$ &  \\
\hline
\multirow{2}{*}{$\mathbb{R}^{3}\times \mathbb{S}^1$} &  $\mathbf{p}_i \in \mathbb{R}^{3}, {\theta_i} \in \left[0,2 \pi \right)$ & \multirow{2}{*}{$\mathbb{S}^2$} & \multirow{2}{*}{$\mathbf{R}_{i}^{\transposed} \vectpNorm{i}{j}$} &  \multirow{2}{*}{$\mathbb{R}^{4n}$} & \multirow{2}{*}{$\infMotion = \colvec{
\infMotionPos^{\transposed} & \infMotionAtt
}^{\transposed}$} & $
\infMotionPos = \colvec{
\dot{\mathbf{p}}_1^{\transposed} \  \dots  \ \dot{\mathbf{p}}_n^{\transposed} 
}^{\transposed}$ & \multirow{2}{*}{$\rank{\bearingMatrixK} = 4n-5$} \\
& (3 tdofs + 1 rdofs) &  & & & & $
\infMotionAtt = \colvec{
{\dot{\theta}_1}^{\transposed} \  \dots  \ {\dot{\theta}_n}^{\transposed} 
}^{\transposed}$ &  \\
\hline \hline

\multirow{2}{*}{$\mathbb{R}^{3}\times SO(3)$} & $\mathbf{p}_i \in \mathbb{R}^{3}, \mathbf{R}_i \in SO(3)$ & \multirow{2}{*}{$\mathbb{S}^2$} & \multirow{2}{*}{$\mathbf{R}_{i}^{\transposed} \vectpNorm{i}{j}$} &  \multirow{2}{*}{$\mathbb{R}^{6n}$} & \multirow{2}{*}{$\infMotion = \colvec{
\infMotionPos^{\transposed} & \infMotionAtt
}^{\transposed}$} & $
\infMotionPos = \colvec{
\dot{\mathbf{p}}_1^{\transposed} \  \dots  \ \dot{\mathbf{p}}_n^{\transposed} 
}^{\transposed}$ & \multirow{2}{*}{$\rank{\bearingMatrixK} = 6n-7$} \\
& (3 tdofs + 3 rdofs) &  & & &  & $\infMotionAtt = \colvec{
\boldsymbol{\omega}_1^{\transposed} \  \dots  \ \boldsymbol{\omega}_n^{\transposed} 
}^{\transposed}$ &  \\
\hline
\end{tabular}
}
\caption{Summary of the principal notions related to bearing rigidity theory for the {differential manifolds} considered in {{Section}~\ref{sec:realization}}.}
\label{tab:2}
\vspace{-0.2cm}
\end{table*}

\section{{Insight into Homogeneous Formations}}
\label{sec:homog_formations}


{In this section, we deal with homogeneous $n$-agent formations as described in Definition~\ref{MainDef:def:Homogeneous}, further assuming that all the agents are able to translate and/or rotate around the \textit{same} directions in the global frame $\mathscr{F}_W$}.} In this case, we have that $c_i^t = c^t$ and $c_i^r=c^r$, hence $c_i = c, \forall i \in \{1 \ldots n\}$; {furthermore, $c_{tot}= \sum_{i=1}^n c_i = cn$ represents the total dofs of the formation corresponding to the dimension of the variation domain $\bar{\mathcal{I}}$.}
We first focus on bearing rigidity theory for frameworks embedded in different {differential manifolds} according to the existing literature and we show that expression~\eqref{eq:General_Bearing_Matrix} fits for all the considered cases ({Section}~\ref{sec:realization}). Along this line, {Table}~\ref{tab:generic_rigidity_matrix} specifies the variables introduced in Section~\ref{sec:unified_view} for the considered manifolds.
 Then, we take into account the IBR condition of Theorem~\ref{thm:unified_condition_IBR} and  we discuss the relations between the properties of bearing rigidity, global bearing rigidity and infinitesimal bearing rigidity 
 ({Section}~\ref{sec:properties_homog_form}).

\subsection{{Manifolds Realizations}}
\label{sec:realization}


{
We recast the results from the literature about bearing rigidity for various domains with the common notation proposed in {Section}~\ref{sec:MainDef}.  {Table}~\ref{tab:2} summarizes the results presented here, anticipating also the IBR condition discussed in the sequel. Time dependency is dropped out for easing the readability.
}

\subsubsection{{Bearing Rigidity Theory in $\mathbb{R}^d$}}
\label{sec:Rd}
when ${{\mathcal{D}}=\mathbb{R}^{d}}$, ${d\in \{2,3\}}$, the attention is focused on
formations of $n$ agents 
wherein each element is modeled as a {particle}
and its configuration coincides with its  position $\mathbf{p}_i \in \mathbb{R}^d$, $i  \in \{1\ldots n\}$, in the {global} frame $\mathscr{F}_W$ that is assumed to be known by all the agents. Each element of the group is, thus, characterized by $c^t=d \in \{2,3\}$ tdofs and $c^r=0$ rdofs. 
Such frameworks, studied in \cite{2016-ZhaoZelazo-RigidityRd}, represent a suitable model, for example, for teams of mobile sensors interacting in a certain (two-dimensional or three-dimensional) area of interest.  Here, the framework model $(\mathcal{G}, \config)$, when $\bar{\mathcal{D}}=\mathbb{R}^{dn}$, the formation configuration $\config$ is associated to the \textit{positions vector} $\vectPos= \colvec{ \mathbf{p}_1^\top  \dots  \mathbf{p}_n^\top }^\top\in\mathbb{R}^{dn}$, and 
the graph $\mathcal{G}$ is \textit{undirected} since the {particle} choice allows to assume bidirectional agent iterations, meaning that neighboring agents are able to reciprocally recover bearing measurements. 
 Following the results from \cite{2016-ZhaoZelazo-RigidityRd}, the bearing rigidity matrix can be expressed as 
\begin{equation}
\label{Rd:eq:BR_Matrix}
\bearingMatrixG= \diag{s_{ij}\mathbf{P}\left(\vectpNorm{i}{j} \right)} \bar{\mathbf{E}}^{\transposed} \in \mathbb{R}^{dm \times dn}.
\end{equation}
Furthermore, given a (non-colinear) $n$-agent formation modeled as a framework $(\mathcal{G}, \config)$ in $\mathbb{R}^{dn}$, it is possible to prove (see {Lemma}~4 in~\cite{2016-ZhaoZelazo-RigidityRd} and Theorem~\ref{Rd:thm:Ker_B_K_for_n>=3} in Appendix) that its trivial variation set coincides with the $(d+1)$-dimensional set
\begin{equation}
\label{Rd:eq:Trivial_Space}
\trivialSpace= \Span{ \mathbf{1}_{n} \otimes \mathbf{I}_d,  \vectPos },
\end{equation}
describing the translation and uniform scaling of $\chi$.

Accounting for Section~\ref{sec:unified_view}, we observe that 
for ${\bar{\mathcal{D}}=\mathbb{R}^{dn}}$, independently on ${d\in \{2,3\}}$,  each agent in the group can be modeled as a rigid body having fixed attitude. From a mathematical perspective, this means that $\mathbf{R}_i = \mathbf{I}_3, \forall i \in \{1 \ldots n\}$, and that $\mathbf{V}_{ij} = \mathbf{0}_{3 \times 3}$ for every bearing measurement both for $d=2$ and $d=3$. On the other hand, in correspondence to all the edges of $\mathcal{G}$, it trivially holds that $\mathbf{U}_{ij} = \colvec{\mathbf{e}_1 \; \mathbf{e}_2 \; \mathbf{0}_3}$ when $d=2$ (embedding $\mathbb{R}^2$ into $\mathbb{R}^3$) and $\mathbf{U}_{ij}=\mathbf{I}_3$ when $d=3$. Hence, $\mathbf{D}_{a}$ in~\eqref{eq:General_Bearing_Matrix} results to be a null matrix, while $\mathbf{D}_{p}$ corresponds to the bearing rigidity matrix~\eqref{Rd:eq:BR_Matrix}.
Analysing then the null space of $\mathbf{B}^+_\mathcal{G}(\chi)$ as in~\eqref{eq:General_Bearing_Matrix}, the virtual trivial variations result to be $q_v = 4n$ when $d=2$ and $q_v = 3n$ when $d=3$ and these corresponds to the three rotational movements, in addition to the translation along the $z$-axis of the 3D global frame for the $d=2$ case.

\subsubsection{{Bearing Rigidity Theory in $\mathbb{R}^d \times \mathbb{S}^1$}}
\label{sec:RdxSO(2)}
 in a formation wherein each agent configuration is defined in $\mathbb{R}^d \times \mathbb{S}^1$, $d \in \{ 2, 3\}$, the $n$ components are all characterized by $c^t=d$ tdofs and $c^r=1$ rdof controllable in a decoupled way.
This is, for instance, the case of teams of unicycle-modeled ground robots ($d\!=\!2$) or of standard under-actuated quadrotors~($d\!=\!3$) whose controllable variables are the position and the yaw~angle.
The described formation can be modeled as a framework $(\mathcal{G}, \config)$ in $\bar{\mathcal{D}} = \left( \mathbb{R}^{d} \times \mathbb{S}^1 \right)^n$. In this case, $\config$ is associated~to  both the \textit{positions vector} $\mathbf{p} = \colvec{  \mathbf{p}_1^\top  \dots  \mathbf{p}_n^\top }^\top \in \mathbb{R}^{dn}$ and~the \emph{attitudes vector} ${\boldsymbol{\theta}} =\colvec{  {\theta_1}  \dots  {\theta_n} }^\top \in \left[ 0, 2\pi \right)^n$, while 
the graph~$\graphG$ is
\emph{directed}, since we assume that agents do not have access to the {global} frame. 
Basing on~\cite{2014-ZelazoFranchiGiordano-RigiditySE2, 2015-ZelazoGiordanoFranchi-RigiditySE2}, the  bearing rigidity matrix can be written as
\begin{equation}
\label{RdxSO2:eq:Rigidity_Matrix}
\bearingMatrixG =
 \colvec{
\check{\mathbf{D}}_p \incidenceExp^{\transposed} &  \check{\mathbf{D}}_a \incidenceOut^{\transposed}  
} \in \mathbb{}^{dm \times \left(d+1\right)n},
\end{equation}
where  $\incidenceExp\in\mathbb{R}^{dn\times dm}$,  $\incidenceOut\in\mathbb{R}^{n\times m}$ are derived from $\graphG$ and
\begin{align}
\check{\mathbf{D}}_p &= \diag{s_{ij}\mathbf{R}_i^{\transposed} \mathbf{P}\left( \vectpNorm{i}{j} \right)}\in \mathbb{R}^{dm \times dm} \;\; \text{if}\; d \in \{ 2, 3\}, \nonumber\\
\check{\mathbf{D}}_a &= \begin{cases}
-\diag{\mathbf{R}_{i}^{\transposed}\vectpNorm{i}{j}^\perp}\in \mathbb{R}^{dm \times m}  & \text{if} \; d=2 \\
-\diag{\mathbf{R}_{i}^{\transposed} \skewMatrix{\vectpNorm{i}{j}}\mathbf{v}}\in \mathbb{R}^{dm \times m}  & \text{if} \; d=3
\label{eq:blocks}
\end{cases}
\end{align}
In~\eqref{eq:blocks}, $\vectpNorm{i}{j}^\perp = \mathbf{R}\left(\frac{\pi}{2}\right) \vectpNorm{i}{j} \in \mathbb{R}^2$, $\mathbf{R}\left(\frac{\pi}{2}\right) \in SO(2)$ is the (unit) vector perpendicular to $ \vectpNorm{i}{j} $ on the plane, while $\mathbf{R}_{i}\in SO(d)$ identifies the orientation of $\mathscr{F}_i$ with respect to $\mathscr{F}_W$. 
For a formation on a plane ($d=2$), 
the orientation of each agent is (completely) specified by an angle that is univocally associated to a rotation matrix $\mathbf{R}_{i}= \mathbf{R}\left({\theta_i} \right) \in SO(2)$, whereas for the 3D case ($d=3$), $\mathbf{R}_i =\mathbf{R} \left( {\theta_i}, \mathbf{v} \right) \in SO(3)$ denotes the rotation of angle ${\theta_i}\in \left[ 0, 2\pi \right)$ around the (unit) vector $\mathbf{v}=\textstyle{\sum_{h=1}^3} v_h \mathbf{e}_h$ with $v_h \in \mathbb{R}$, identifying a fixed direction in $\worldframe$. 

The trivial variations coincide with the translation and uniform scaling of the entire configuration, jointly 
with the \textit{coordinated rotation}, namely the equal rotation of all the agents jointly with the equal rotation of the whole formation around its center.  The {coordinated rotation subspace $\coordRot$} is formally determined as 
\begin{equation}
\coordRot = \begin{cases}
\Span{\colvec{
	\left(  \identityMatrix{n} \otimes \mathbf{R}\left(\pi/2\right)\right) \vectPos\\
	\vectOnes{n}
	}}, & \: \text{if} \  d=2 \\
\Span{\colvec{
	\left(  \identityMatrix{n} \otimes \skewMatrix{\mathbf{v}} \right) \vectPos \\
	\vectOnes{n}
	}}, & \: \text{if} \ d=3
\end{cases}
\end{equation}
and, since $\Dim{\coordRot}=1$ for $d \in\{2,3\}$, the trivial variations set has dimension $q_t=d + 2$. This formally results 
\begin{equation}
	\label{RdxSO2:eq:Trivial_Space}
	\trivialSpace = \Span{ \colvec{
	\mathbf{1}_{n} \otimes \mathbf{I}_d\\
	\mathbf{0}_n\\
	}, \colvec{
	\vectPos \\
	\mathbf{0}_n
	}, \mathcal{R}_{\circlearrowleft}}.
\end{equation}

Adopting the unified view of Section~\ref{sec:unified_view}, it is convenient to distinguish between the differential manifolds  $\mathbb{R}^2 \times \mathbb{S}^1 = SE(2)$ and $\mathbb{R}^3 \times \mathbb{S}^1$. Indeed, in the first case ($d=2$), assuming that the formation evolves on the $(xy)$-plane of $\mathscr{F}_W$ identified by $\mathbf{e}_1 \times \mathbf{e}_2$, we have that $\mathbf{U}_{ij} = \colvec{\mathbf{e}_1 \; \mathbf{e}_2 \; \mathbf{0}_3}$ for all the measurements, and that
{$\mathbf{R}_i = \mathbf{R}({\theta_i}, \mathbf{e}_{3})$ $\forall i \in \{ 1 \ldots n\}$}, according to the axis-angle representation $\mathbf{R}(\cdot, \cdot)$ of a 3D rotation, hence, $\mathbf{V}_{ij} = \colvec{\mathbf{0}_{3\times 2} \;\mathbf{e}_3}$ in correspondence to any edges of $\mathcal{G}$. In the second case ($d=3$), instead, $\mathbf{U}_{ij} = \mathbf{I}_3$, while $\mathbf{R}_i = \mathbf{R}({\theta_i}, \mathbf{v})$, 
and, thus, $\mathbf{V}_{ij} = \colvec{\mathbf{0}_{3\times 2} \;\mathbf{v}}$. 
{In particular, one can check that  $\colvec{\mathbf{0}_{3\times 2} \; [(\vectpNorm{i}{j}^\perp)^\top \; 0]^\top} = \skewMatrix{[\vectpNorm{i}{j}^\top \; 0]^\top}\colvec{\mathbf{0}_{3\times 2} \; \mathbf{e}_3}$ and $\colvec{\mathbf{0}_{3\times 2} \; \skewMatrix{\vectpNorm{i}{j}}\mathbf{v}}=\skewMatrix{\vectpNorm{i}{j}}\colvec{\mathbf{0}_{3\times 2} \; \mathbf{v}}$.} Thus, we conclude that the expression~\eqref{eq:General_Bearing_Matrix} can be reduced to the bearing rigidity matrix~\eqref{RdxSO2:eq:Rigidity_Matrix}.
Furthermore, when $\bar{\mathcal{D}}=(\mathbb{R}^d \times \mathbb{S}^1)^n$, the set $\mathcal{S}_v$ includes the $2n$ unfeasible rotational movements of the agents, in addition to the translation along the $z$-axis of $\mathscr{F}_W$ when $d=2$. Summing up, concerning the virtual trivial variations, we have that $q_v = 3n$ when $d=2$ and $q_v = 2n$ when $d=3$.


\subsubsection{{Rigidity Theory in $SE(3)$}}
\label{sec:SE3}

the recent work~\cite{2016-MichielettoCenedeseFranchi-RigiditySE3} accounts for formations of agents  equipped with a bearing sensor whose configuration is given by an element of $SE(3)$. An example is given by a swarm of fully-actuated aerial platforms provided with on-board omnidirectional cameras.
The considered group of agents can be modeled as a framework $\framework{\config}$ in $\bar{\mathcal{D}} =SE(3)^n$, where $\graphG$ is a \textit{directed} graph and $\config$  deals with the \textit{positions vector} $\mathbf{p} = \colvec{ \mathbf{p}_1^\top  \dots  \mathbf{p}_n^\top }^\top \in \mathbb{R}^{3n}$, and the $(3n \times 3)$~ \textit{attitudes matrix} $\mathbf{R}_a = \colvec{ \mathbf{R}_1^\top \dots \mathbf{R}_n^\top }^\top\in SO(3)^{n}$, stacking all the agent position vectors and rotation matrices, respectively. 
Note that the commands space $\mathcal{I}$ of each $i$-th agent includes its linear velocity $\dot{\mathbf{p}}_i \in \mathbb{R}^3$ and its angular velocity $\boldsymbol{\omega}_i \in \mathbb{R}^3$, both expressed in $\worldframe$.
Given these premises, it is thus possible to prove that the bearing rigidity matrix, belonging to $ \mathbb{R}^{3m\times 6n}$, turns out to be
\begin{equation}
\label{SE3:eq:Bearing_Matrix}
	\bearingMatrixG \! = \! \colvec{
	\diag{s_{ij} \mathbf{R}_{i}^{\transposed} \mathbf{P}\left(\vectpNorm{i}{j} \right)}\bar{\mathbf{E}}^{\transposed} \!\! & \! -\diag{ \mathbf{R}_i^{\transposed}\skewMatrix{\vectpNorm{i}{j}} } \bar{\mathbf{E}}^{\transposed}_{o} 
	}.
\end{equation} 

Comparing~\eqref{SE3:eq:Bearing_Matrix} with~\eqref{RdxSO2:eq:Rigidity_Matrix}, we observe that 
the translation, uniform scaling and coordinated rotation are trivial variations also for a framework $\framework{\config}$ in $SE(3)^n$, however the concept of coordinated rotation has to be redefined since the agents orientation is no  longer controllable only via a single angle.\\
Specifically, it has been proven in~\cite{2016-MichielettoCenedeseFranchi-RigiditySE3}
\begin{equation}
\label{SE3:eq:CoordRot_Space}
\coordRot = \Span{\left\{\colvec{
\left(  \identityMatrix{n} \otimes \skewMatrix{\mathbf{e}_{h}} \right) \vectPos \\
\vectOnes{n}\otimes \mathbf{e}_h \\
}\right\}_{h=1,2,3} }.
\end{equation}
Therefore, the following set $\trivialSpace$ tha  $\Dim{\trivialSpace}=7$,
\begin{equation}
\label{SE3:eq:Trivial_Space}
\trivialSpace = \Span{ \colvec{
	\mathbf{1}_{n} \otimes \mathbf{I}_3\\
	\mathbf{0}_{3n}\\
	}, \colvec{
	\vectPos \\
	\mathbf{0}_{3n}
	}, \mathcal{R}_{\circlearrowleft}}.
\end{equation}

When $\bar{\mathcal{D}}=SE(3)^n$, each agent is characterized by $c=6$ dofs and, thus, it can vary its position and attitude in any direction of the 3D space.
Therefore, it follows that $\mathbf{U}_{ij}=\mathbf{V}_{ij} = \mathbf{I}_3$ and the matrix $\mathbf{B}^+_{\mathcal{G}}(\chi)$ corresponds to that in~\eqref{SE3:eq:Bearing_Matrix}.
Clearly, when the agents act in $SE(3)$, no embedding in an higher dimensional manifold is needed, which translates into the absence of virtual trivial motions, and, correspondingly, no null columns appear in $\mathbf{B}^+_{\mathcal{G}}(\chi)$.

\subsection{{Rigidity Properties}}
\label{sec:properties_homog_form}


{Accounting for Theorem~\ref{thm:unified_condition_IBR}, the next result provides a necessary and sufficient condition to check if a given homogeneous formation is IBR. Its validity is guaranteed by observing that $\rank{{\mathbf{B}_\mathcal{G}(\chi(t))}}= \rank{{\mathbf{B}^+_\mathcal{G}(\chi(t))}}$ together with the results of {Theorem}~4 in~\cite{2016-ZhaoZelazo-RigidityRd}, {Theorem}~III.6 in~\cite{2014-ZelazoFranchiGiordano-RigiditySE2} and {Theorem}~3 in~\cite{2016-MichielettoCenedeseFranchi-RigiditySE3} dealing with homogeneous frameworks embedded in $\mathbb{R}^{dn}$, $(\mathbb{R}^d \times \mathbb{S}^1)^n$, $SE(3)^n$, respectively.}

\begin{thm} 
\label{thm:Condition_IBR}
{A homogeneous} framework $(\mathcal{G}, \chi(t))$ in $\bar{\mathcal{D}}$, where each agent has $c\leq 6$ dofs is IBR if and only if $\rank{{\mathbf{B}^+_\mathcal{G}(\chi(t))}}=cn-c-1$.
\end{thm}

\begin{proof}
\new{Note that $\rank{\mathbf{B}^+_\mathcal{K}(\chi(t))} = 6n-q_t-q_v$ where the difference $6n-q_v$ corresponds to the total dofs  of the formations, namely to $c_{tot} = cn \leq 6n$ in homogeneous cases.
On the other hand, from~\eqref{Rd:eq:Trivial_Space},~\eqref{RdxSO2:eq:Trivial_Space} and~\eqref{SE3:eq:Trivial_Space}, it can be inferred that for {a homogeneous} framework  it holds that $q_t=c+1$, i.e., shape preservation is ensured when the formation acts as a unique rigid body having $c$ DoFs and  when it scales. Theorem~\ref{thm:unified_condition_IBR}, thus, ensures that a non-colinear homogeneous formation is IBR if and only if $\rank{\mathbf{B}^+_\mathcal{G}(\chi(t))} = \rank{\mathbf{B}^+_\mathcal{K}(\chi(t))} = (6n-q_v)-q_t = cn - (c+1)$
}
\end{proof}


 As already highlighted in Section~\ref{sec:framework}, the \new{IBR property} is generally studied for dynamic frameworks, while \new{the BR and the GBR ones are} usually discussed for static frameworks. Nonetheless, these last two properties can be also stated for dynamic frameworks over time, and in particular they can hold for any $t$. 
 %
 %

%
Given these premises, \new{the next theorem provides a complete characterization to  clarify the relation between BR, GBR and IBR properties for non-colinear frameworks embedded in any differential manifold $\bar{\mathcal{D}}$.}

\begin{thm}
\label{thm:IBR_GBR}
\new{Given a differential manifold $\bar{\mathcal{D}}$ for  any $t$,
\begin{itemize}
\item [$i)$] a framework $(\mathcal{G},{\chi(t)})$ is IBR if and only if it is BR;
\item [$ii)$] a framework $(\mathcal{G},{\chi(t)})$ is IBR if it is GBR.
\end{itemize}}
\end{thm}
%
%
\new{
\begin{proof}
$i)$ Proceeding by contrapositive, if $(\mathcal{G},{\chi(t)})$ is not BR for any $t$, there exists at least a configuration $\chi(t^\prime)$ in the neighborhood $\setNeigh{\config}$ of $\chi(t)$ such that $\chi(t^\prime)\in \setEquivalence{\config(t)}\setminus \setCongruence{\config(t)}$, and therefore the framework $(\mathcal{G},{\chi(t)})$ would result to be IBF according to the consequence of Definition~\ref{MainDef:def:IBR}.
To prove the reverse, we assume that the framework $(\mathcal{G},{\chi(t)})$ is BR  for any $t$.
Ab absurdo, if the framework was not IBR (IBF) there would be a variation deforming $\chi(t)$ into $\chi(t^\prime) \in \setEquivalence{\config(t)}\setminus \setCongruence{\config(t)}$, implying a contradiction with respect to Definition~\ref{MainDef:def:BR}. \\ 
%
$ii)$ It is a direct implication of Proposition~\ref{prop:GBR_BR} and i).
%
%
\end{proof}
}


\new{The following results can also be stated for the frameworks  embedded in $\bar{\mathcal{D}} = \mathbb{R}^{dn}$, $d \in \{2,3\}$.}

\new{
\begin{prop}
\label{prop:BR_GBR}
A BR framework $(\mathcal{G},{\chi(t)})$ in $\bar{\mathcal{D}} = \mathbb{R}^{dn}$, $d \in \{2,3\}$, is also GBR.
\end{prop}
\begin{proof}
According to {Theorem}~1 and {Theorem}~3 in\new{~\cite{2016-ZhaoZelazo-RigidityRd}}, a framework {$(\mathcal{G},{\chi(t')})$} is BE/BC to $(\mathcal{G},{\chi(t)})$ if and only if its corresponding positions vector belongs to $
\Ker{\mathbf{B}_{\mathcal{G}}({\chi(t))}}$/$
\Ker{\mathbf{B}_{\mathcal{K}}({\chi(t))}}$, moreover, for a BR framework it holds that $\Ker{\mathbf{B}_{\mathcal{G}}({\chi(t))}} \subseteq \Ker{\mathbf{B}_{\mathcal{K}}({\chi(t)})}$. These two facts imply that a framework $(\mathcal{G},{\chi(t)})$ in $\bar{\mathcal{D}} = \mathbb{R}^{dn}$, $d \in \{2,3\}$, is GBR if the condition $\Ker{\mathbf{B}_{\mathcal{G}}({\chi(t)})} = \Ker{\mathbf{B}_{\mathcal{K}}({\chi(t)})}$ holds. Hence,  because of Theorem~\ref{MainDef:thm:Inclusion_KerB_K_KerB_G}, the bearing rigidity property ensures the global rigidity property.
\end{proof}

We observe that the requirement on the null spaces equivalence for GBR property derived in the proof of Proposition~\ref{prop:BR_GBR} coincides with the definition of IBR property provided in {Definition}~\ref{MainDef:def:IBR}. Thus, a IBR framework in $\mathbb{R}^{dn}$ is also GBR for any time  and {vice versa}, leading to the next corollary.

\begin{cor}
For frameworks embedded in $\bar{\mathcal{D}} = \mathbb{R}^{dn}$, $d \in \{2,3\}$, bearing rigidity, global bearing rigidity and infinitesimal bearing rigidity are equivalent properties.
\end{cor}
}

\new{The equivalence among rigidity properties is not valid when introducing the rotational dofs as for frameworks in $(\mathbb{R}^d \times \mathbb{S}^1)^n$, $d \in \{2,3\}$, or in $SE(3)^n$. Figure~\ref{fig:counter_example} provides an example (panel (a)) of a IBR framework in $(\mathbb{R}^2 \times \mathbb{S}^1)^n$ composed of $n=4$ agents, that is not GBR since it is possible to deform into a BE and not BC framework (panel (b)).}

\begin{figure}[t]
    \centering
    \includegraphics[width =0.43\textwidth]{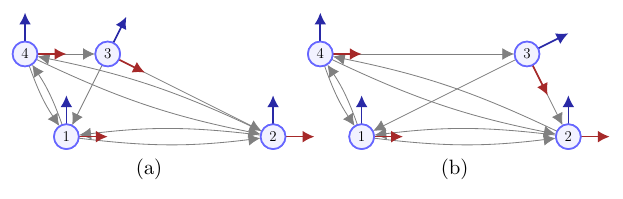}
    \caption{\new{Example of IBR and not GBR framework in  $\bar{\mathcal{D}}\!=\!(\mathbb{R}^2 \!\times\! \mathbb{S}^1)^n$ with $n=4$. Blue circles, blue-red arrows, gray arrows denote respectively the agents, the corresponding local frames and the existing bearing measurements.}}
    \label{fig:counter_example}
\end{figure}
\section{{Applications of Unified Rigidity Theory}}
\label{sec:applications}


In these case studies, time dependency is omitted for brevity.

\subsection{{Homogeneous Formation in $\mathbb{R}^3 \times \mathbb{S}^2$ Case Study}}
\label{sec:new_homogeneous_rigidity}
As a first application of unified rigidity theory, we discuss a homogeneous formation case study that has not been considered in the literature and show how this can be well accommodated in the framework of Section~\ref{sec:unified_view}.

We account for the case study in Figure~\ref{fig:mixed_formation}(a), where a formation is composed of four aerial platforms (blue circle) with bearings coming from gimbal cameras, which, despite the free movement of the agents in the 3D space, keep their image planes aligned with the horizon and are denied the roll motion: for each agent {$c=5$ ($c^t=3$ and $c^r=2$)}, and $\mathcal{D}=\mathbb{R}^{3}\times \mathbb{S}^2$. 
{The considered $n=4$ agents group is modeled as {a homogeneous} agent framework $(\mathcal{K}, \chi)$, where $\chi = (\chi_1 \ldots \chi_4)$ with $\chi_1, \chi_2,\chi_3,\chi_4 \in \mathbb{R}^3 \times \mathbb{S}^2$. Note that the relative bearing measurement between each pair of agents is expressed in their local frames and the underlying graph is complete, thus the formation is IBR, according to Definition~\ref{MainDef:def:IBR}.}

According to Proposition~\ref{prop:rigidity_matrix} and the related discussion held in Section~\ref{sec:homog_formations} for the homogeneous case, all the edges can vary in the 3D space in the same way, in particular their rotations are constrained around the $y$-axis and $z$-axis of $\mathscr{F}_W$. Hence, $\mathbf{U}_{ij} = \mathbf{I}_3$ and $\mathbf{V}_{ij} = [ \mathbf{0}_3 \; \mathbf{e}_2 \; \mathbf{e}_3]$ in correspondence to any ${e}_{ij} \in \mathcal{E}$. The bearing matrix $\mathbf{B}^+_{\mathcal{K}}(\chi) \in \mathbb{R}^{36 \times 24}$ takes the form as in Figure~\ref{fig:mixed_rigidity_matrix}(a), where four null columns can be highlighted.

\new{From Theorem~\ref{thm:Condition_IBR}, it holds that $\rank{\mathbf{B}^+_{\mathcal{K}}(\chi)}=4(3+2)-(3+2)-1 = 14$. This fact can be numerically verified and implies that $\nullity{\mathbf{B}^+_{\mathcal{K}}(\chi)}=q_t + q_v = 10$. In particular, we can identify  {$q_t = c+1 = 6$} trivial variations, which are three translations, two coordinated rotations, and one scaling, in addition to {$q_v =4$} virtual variations that are due to the four null columns of $\mathbf{B}^+_{\mathcal{K}}(\chi)$ and correspond to the actual constraints on the agents motion. 
}

\begin{figure}[t!]
\centering
\includegraphics[width=\columnwidth]{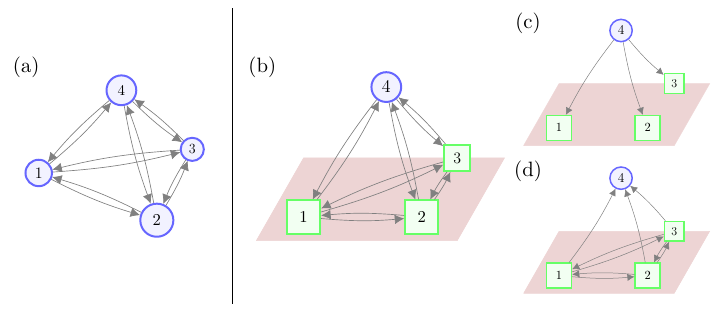}
 \caption{{Sensing graphs for the considered case studies. (a) Homogeneous formation. (b) Heterogeneous formation $\mathcal{K}=\mathcal{G}_1 \cup \mathcal{G}_2$, with (c) aerial bearing subgraph $\mathcal{G}_1$, (d) terrestrial bearing subgraph $\mathcal{G}_2$.}}
 \label{fig:mixed_formation}
\vspace{-0.0cm}
\end{figure} 
\begin{table}[t!]
{
    \centering
    \begin{tabular}{|c|c|c|}
    \hline
   $(i,j)$  & $\mathbf{U}_{ij}$ & $\mathbf{V}_{ij}$ \\
   \hline \hline
   $(1,2),(1,3),(2,1),(2,3),(3,1),(3,2)$      & $[\mathbf{e}_1 \; \mathbf{e}_2 \; \mathbf{0}_{3}]$ & $[\mathbf{0}_{3\times 2} \;\mathbf{e}_3]$  \\
    $(1,4), (2,4), (3,4)$  & $\mathbf{I}_3$  & $[\mathbf{0}_{3\times 2} \;\mathbf{e}_3]$  \\
    $(4,1), (4,2), (4,3) $ & $\mathbf{I}_3$  &  $\mathbf{I}_3$ \\
         \hline
    \end{tabular}
    }
    \caption{{Matrices for the heterogeneous formation case study.}}
    \label{tab:selection_matrices}
\end{table}

\subsection{{Heterogeneous Formation Case Study}}
\label{sec:heterogeneous_rigidity}

The discussion carried out in {Section}~\ref{sec:unified_view} about the structure of the rigidity matrix turns out to be {particularly} useful for the case of heterogeneous formations made up of agents {whose configurations evolve in different differential manifolds.} 

{To validate this observation, we account for the case study in the right part of {Figure}~\ref{fig:mixed_formation} (panels (b)-(c)-(d)) referring to a formation involving three unicycle-modeled terrestrial robots (green squares), and a fully-actuated aerial platform (blue circle). We assume that the agents are able to retrieve relative bearings expressed in their local frames according to the depicted complete graph $\mathcal{K}$.} 
{Hence, the considered $n=4$ agents group {is modeled as the IBR framework} $(\mathcal{K}, \chi)$, where $\chi = (\chi_1 \ldots \chi_4)$ with $\chi_1, \chi_2,\chi_3 \in \mathbb{R}^2 \times \mathbb{S}^1$, and $\chi_4 \in SE(3)$.} 

{Assuming that the global frame is oriented so that the terrestrial vehicles can move on the $(xy)$-plane of $\mathscr{F}_W$, the matrices $\mathbf{U}$ and $\mathbf{V}$ in \eqref{eq:General_Bearing_Matrix} are determined based on Table \ref{tab:selection_matrices}. 
Hence, the rigidity matrix $\mathbf{B}^+_{\mathcal{K}}(\chi) \in \mathbb{R}^{36 \times 24}$ has the structure reported in {Figure}~\ref{fig:mixed_rigidity_matrix}(b).}

{The set  $\mathcal{S}_t$ counts $q_t=5$ trivial variations consisting of the translation of the whole formation on the $(xy)$-plane of the global frame, its coordinated rotation around the $z$-axis of $\mathscr{F}_W$, and its scaling (indeed, the third, sixth, ninth, tenth and eleventh columns of $\mathbf{B}^+_{\mathcal{K}}(\chi)$ results to be linearly dependent with respect to the remaining ones). In addition, the matrix has six null columns, meaning that the set $\mathcal{S}_v$ of virtual variations has cardinality $q_v =6$. This, in fact, coincides with the (unfeasible) rotations of the terrestrial robots around their $x$ and $y$-axis, and is related to the zero columns in  $\mathbf{B}^+_{\mathcal{K}}(\chi)$.
From these premises, it follows that $\nullity{\mathbf{B}^+_{\mathcal{K}}(\chi)}=11$ and, finally, that $\rank{\mathbf{B}^+_{\mathcal{K}}(\chi)}=13$, as can be numerically verified.}

{To conclude, we note that the given IBR formation}
results from the union of two sensing graphs, each one corresponding with the measurements obtained by either the terrestrial robots ($\mathcal{G}_1$) or the aerial platform ($\mathcal{G}_2$), as in {Figures}~\ref{fig:mixed_formation}(c) and \ref{fig:mixed_formation}(d). Focusing on the infinitesimal rigidity of the resulting two frameworks, we can observe that  any rotation of the aerial vehicle is an infinitesimal variation for $(\mathcal{G}_1,\chi)$, and similarly, any rotation of the terrestrial vehicle is an infinitesimal variation for {$(\mathcal{G}_2,\chi)$},  concluding that the two frameworks are not IBR differently  from their union $(\mathcal{K},\chi)$.   

\begin{figure}[t!]
\begin{center}
\includegraphics[width=0.95\columnwidth]{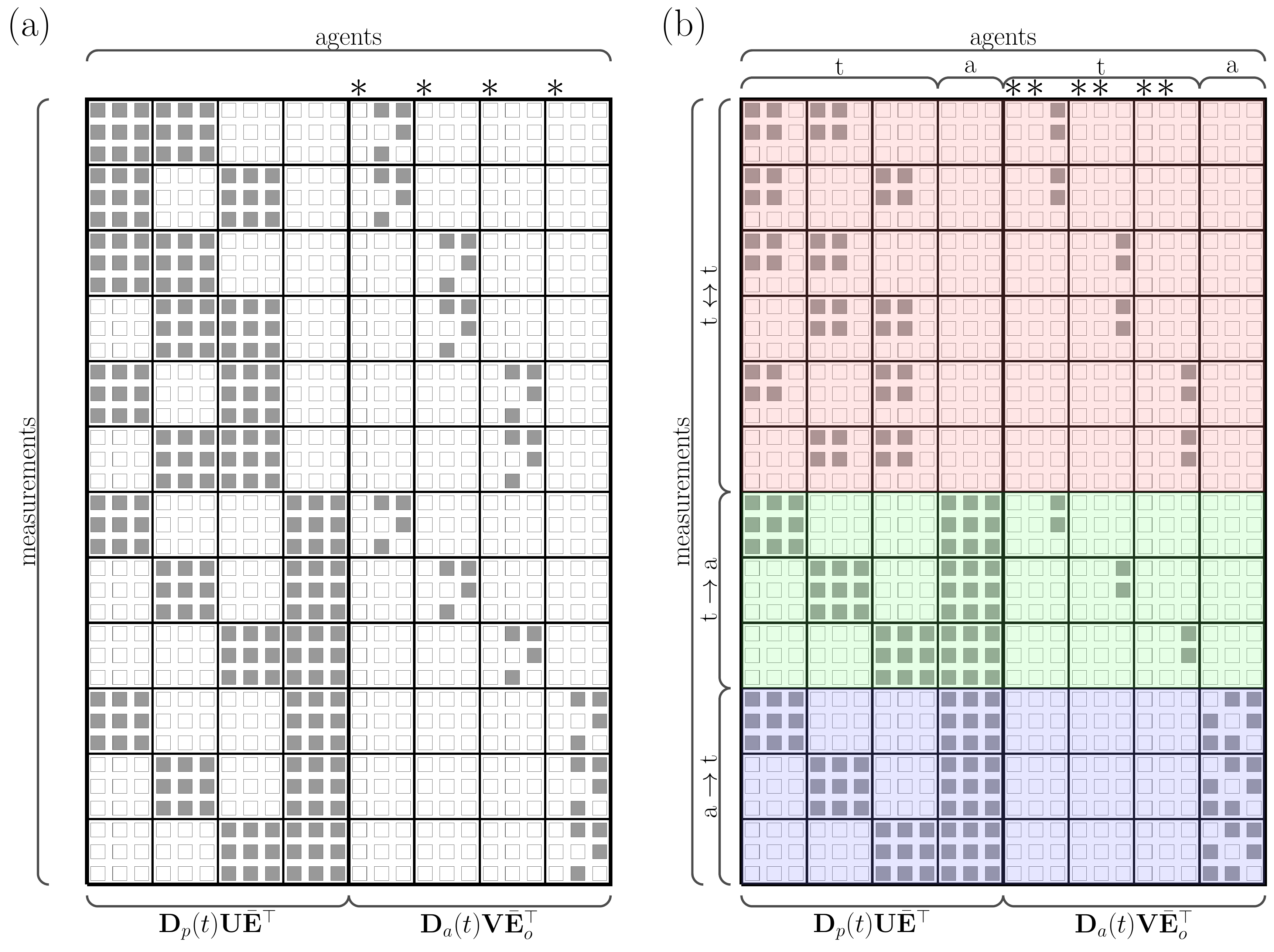}
\end{center}
\caption{{Structure of the bearing rigidity matrix for the considered case studies. (a) Homogeneous and (b) heterogeneous formations.} Dark squares indicate non-null values of the matrix; {in the heterogeneous case t and a labels refer respectively to terrestrial and aerial agents. Null columns are indicated with an asterisk.}}
\label{fig:mixed_rigidity_matrix}
\vspace{-0.2cm}
\end{figure}

\section{On Colinear Formations}
\label{sec:colinear_cases}

In this section we briefly discuss the colinear formations case, focusing on the bearing-preserving variations set.
According to {Definition}~\ref{MainDef:def:colinear}, a formation composed of $n\geq3$ {distinctly} placed agents is \textit{colinear} if all the agents are collinear, i.e., for any $k$-th component of the position vectors, $k \in \{1 \ldots d\}$, it exists  $c\in\mathbb{R}$ such that $p_i^k = c p_j^k$ for each pair $(v_i,v_j)$ of agents in the group. Under this hypothesis, we can observe that the shape uniqueness is guaranteed for a larger set of infinitesimal variations {with respect to} that described in the previous sections. Although this statement is valid independently on $\bar{\mathcal{D}}$, in the following we distinguish between the three cases previously treated.

\paragraph{$\bar{\mathcal{D}}=\mathbb{R}^{dn}$, $d\in\{2,3\}$}  for a formation composed of $n$ agents, controllable in $\mathbb{R}^d$, $d\in\{ 2,3\}$, and aligned along a certain direction identified by the unit vector {$\mathbf{w} \in \mathbb{S}^{d-1}$} the bearing measurements are collinear, namely {$
\bearingFunctionG{\config} = \diag{b_1\mathbf{I}_d \,\cdots \, b_m\mathbf{I}_d} \left({\mathbf 1}_m \otimes \mathbf{w}\right),
$ with $b_i\in{\mathbb R}$ for $i \in \{1 \ldots m\}$}. Thus, these are preserved despite the displacement of any agent along the direction specified by $\mathbf{w}$ and the translation of the whole formation in the subspace $\mathcal{W}$ of $\mathbb{R}^d$ orthogonal to {$\mathbf{w}$}. Hence the trivial variation set\footnote{Note that the uniform scaling of the formation corresponds to suitable (not equal) translations of all the agents along the direction $\mathbf{v} \in \mathbb{S}^{d-1}$.} coincides with $\mathcal{S}_t^d=\Span{\mathbf{I}_n \otimes {\mathbf{w}}, {\mathbf 1}_n \otimes \mathbf{W}}$ where $\mathbf{W} \in \mathbb{R}^{d \times (d-1)}$ is a matrix whose columns represent a basis for $\mathcal{W}$. Trivially, $\mathcal{S}_t^d$ has dimension $n+(d-1) > d+1 = \vert \mathcal{S}_t \vert$, with $\mathcal{S}_t$ as in~\eqref{Rd:eq:Trivial_Space}. 

\paragraph{$\bar{\mathcal{D}}=(\mathbb{R}^d \times \mathbb{S}^1)^n$, $d \in \{2,3\}$} for {an $n$-agent} formation acting in $(\mathbb{R}^d\times \mathbb{S}^1)^n$, the bearing measurements are retrieved in the local agents frame. Additionally, each agent has a (controllable) rdof  allowing rotations only around the direction of $\mathbf{v} \in \mathbb{S}^{2}$ when $d=3$. To analyze the colinear situation in which all the agents are aligned along the direction identified by the unit vector ${\mathbf{w}} \in \mathbb{S}^{d-1}$, it is necessary to distinguish between the following cases: $(i)$ $d=2$, or $d=3$ and {$\mathbf{v} \neq \mathbf{w}$}, $(ii)$ $d=3$ and {$\mathbf{v} = \mathbf{w}$}. For a colinear formation satisfying conditions $(i)$, the bearing measurements are preserved when the whole agents group translates along any direction in the $(d\!-\!1)$-dimensional subspace $\mathcal{W}\subseteq \mathbb{R}^d$ orthogonal to {$\mathbf{w}$}, when a coordinated rotation is performed according to the definition given in {Section}~\ref{sec:RdxSO(2)}, and also when any agent moves along the alignment direction. Therefore, the trivial variation set $\mathcal{S}_t^d$ is spanned by $n+(d-1)+1$ elements. In case $(ii)$, the dimension of $\mathcal{S}_t^d$ increases since the formation is not required to perform a coordinated rotation to preserve the bearings: also the rotation of any agent around the axis identified by {$\mathbf{v} = \mathbf{w}$} ensures the measurements maintenance. Hence, we get $\vert \mathcal{S}_t^d \vert = 2n+(d-1)$. Note that in both cases $(i)$ and $(ii)$  the trivial variation set has dimension grater {with respect to} non-colinear case for which $\vert \mathcal{S}_t \vert = d+2$ according~to~\eqref{RdxSO2:eq:Trivial_Space}.

\paragraph{$\bar{\mathcal{D}}=SE(3)^n$}  when the space of interest is $\bar{\mathcal{D}}=SE(3)^n$,  we figure out that for the colinear case in which the $n$ agents are aligned along the direction identified by  the unit vector {$\mathbf{w} \in \mathbb{S}^{2}$},  bearings are preserved when any agent translates or rotates along the direction of $\mathbf{v}$ and when the whole formation performs a translation or a coordinated rotation around any direction in the (two-dimensional) subspace $\mathcal{W} \subseteq \mathbb{R}^3$ orthogonal to {$\mathbf{w}$}. The trivial variation set has thus dimension $2n+4 > 7=\vert \mathcal{S}_t \vert$ with $\mathcal{S}_t$ as in~\eqref{RdxSO2:eq:Trivial_Space}.

In general, we can observe that for every {differential manifold} $\bar{\mathcal{D}}$  it occurs that $\vert \mathcal{S}_t^d \vert> \vert \mathcal{S}_t \vert$.


\section{Conclusions}

\label{sec:conclusions}

This work focuses on bearing rigidity theory applied to multi-agent systems whose elements are characterized by a certain number of both tdofs and rdofs. As original contribution, we  propose a general framework for the definition of the main rigidity properties without accounting for the specific controllable agents state domain. Moreover, we summarize the existing results about bearing rigidity theory for frameworks embedded in~$\mathbb{R}^d$, in $\mathbb{R}^d \times \mathbb{S}^1$ with $d \in \{2,3\}$ and in $SE(3)$. For each case, the principal definitions are provided and the infinitesimal rigidity property is investigated by deriving a necessary and sufficient condition based on the rigidity matrix rank. 
In addition, we provide a necessary and sufficient condition to check IBR property of a given system independently on its {differential manifold}. This arises from the derivation of a unified structure of the rigidity matrix that does not rest on the specific agents domain but exploits the fact that ${\mathcal{D}} \subseteq SE(3)$.

\section*{Acknowledgements}

We would like to thank M.~Pasquetti (Amminex Emissions Technology), A.~Franchi (University of Twente), 
and S.~Zhao (Westlake University) for their 
valuable comments.

\section*{Appendix}

\begin{thm}
\label{Rd:thm:Ker_B_K_for_n>=3}
For a {non-colinear} framework $(\mathcal{K}, \config)$ in $\mathbb{R}^{dn}$, $d \in \{ 2, 3\}$, it holds that $\Ker{\bearingMatrixK} = \trivialSpace$, or equivalently, $\rank{\bearingMatrixK}= dn-d-1$,
where $\trivialSpace$ corresponds to the trivial variation set~\eqref{Rd:eq:Trivial_Space}.   
\end{thm}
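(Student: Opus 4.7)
The plan is to establish the two set inclusions $\mathcal{S}_t \subseteq \ker(\mathbf{B}_{\mathcal{K}}(\chi))$ and $\ker(\mathbf{B}_{\mathcal{K}}(\chi)) \subseteq \mathcal{S}_t$, then to verify that $\mathcal{S}_t$ has dimension $d+1$ under the non-degeneracy hypothesis, so that the rank-nullity theorem applied to $\mathbf{B}_{\mathcal{K}}(\chi) \in \mathbb{R}^{dm\times dn}$ delivers the stated rank condition.

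For the inclusion $\mathcal{S}_t \subseteq \ker(\mathbf{B}_{\mathcal{K}}(\chi))$, I would treat the two generating families separately. For translations, I would use the column-sum-zero property of the incidence matrix: any $\boldsymbol{\delta} = \mathbf{1}_n \otimes \mathbf{v}$ satisfies $\bar{\mathbf{E}}^\transposed \boldsymbol{\delta} = \mathbf{0}$ because each edge contributes $\mathbf{v} - \mathbf{v}$, which then annihilates $\mathbf{B}_{\mathcal{K}}(\chi)\boldsymbol{\delta}$. For uniform scaling, substituting $\boldsymbol{\delta} = \mathbf{p}$ yields $\bar{\mathbf{E}}^\transposed \mathbf{p}$ equal to the stack of relative position vectors $\mathbf{p}_{ij}$; each block is then mapped by $d_{ij}\mathbf{P}(\bar{\mathbf{p}}_{ij})$ to $d_{ij}\|\mathbf{p}_{ij}\|\mathbf{P}(\bar{\mathbf{p}}_{ij})\bar{\mathbf{p}}_{ij}$, which vanishes because $\mathbf{P}(\bar{\mathbf{p}}_{ij})$ projects onto the orthogonal complement of $\bar{\mathbf{p}}_{ij}$.

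For the reverse inclusion, which is where the non-degeneracy assumption enters essentially, I would take any $\boldsymbol{\delta} = [\boldsymbol{\delta}_1^\transposed \cdots \boldsymbol{\delta}_n^\transposed]^\transposed$ in $\ker(\mathbf{B}_{\mathcal{K}}(\chi))$. Since $\mathcal{K}$ contains every pair $(v_i,v_j)$, the condition $\mathbf{P}(\bar{\mathbf{p}}_{ij})(\boldsymbol{\delta}_j - \boldsymbol{\delta}_i) = \mathbf{0}$ forces $\boldsymbol{\delta}_j - \boldsymbol{\delta}_i = \lambda_{ij}(\mathbf{p}_j - \mathbf{p}_i)$ for some scalar $\lambda_{ij}\in\mathbb{R}$. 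The central step is to prove that all the $\lambda_{ij}$ coincide. For any triple $(i,j,k)$, adding the relations along the path $i\to j\to k$ and comparing with the relation for $i\to k$ yields
\begin{equation*}
(\lambda_{ij}-\lambda_{ik})(\mathbf{p}_j-\mathbf{p}_i) + (\lambda_{jk}-\lambda_{ik})(\mathbf{p}_k-\mathbf{p}_j) = \mathbf{0}.
\end{equation*}
By non-degeneracy (Def.~\ref{MainDef:def:Degenerate}) there exists at least one non-collinear triple, and since $\mathcal{K}$ is complete one can propagate the equality $\lambda_{ij}=\lambda_{jk}=\lambda_{ik}=:\lambda$ through the graph to conclude that a common $\lambda$ works for all pairs. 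Setting $\mathbf{v} := \boldsymbol{\delta}_1 - \lambda \mathbf{p}_1$ then gives $\boldsymbol{\delta}_i = \mathbf{v} + \lambda\mathbf{p}_i$ for all $i$, i.e.\ $\boldsymbol{\delta} = (\mathbf{1}_n\otimes\mathbf{I}_d)\mathbf{v} + \lambda\mathbf{p} \in \mathcal{S}_t$.

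Finally, I would verify the dimension count. The $d$ columns of $\mathbf{1}_n \otimes \mathbf{I}_d$ are manifestly linearly independent, and $\mathbf{p}$ lies outside their span precisely when the agents are not coincident, which is guaranteed by non-degeneracy (in fact a stronger condition, since the position matrix has rank greater than one). Thus $\dim(\mathcal{S}_t) = d+1$, and $\mathrm{rk}(\mathbf{B}_{\mathcal{K}}(\chi)) = dn - d - 1$ by rank-nullity. The main obstacle is the scalar-equality step: propagating $\lambda_{ij}=\lambda$ across all pairs is where non-degeneracy must be invoked carefully, since in a collinear configuration the key identity does not force equal scalars and the kernel becomes strictly larger (as anticipated by the discussion in Sec.~\ref{sec:degenerate_cases}).
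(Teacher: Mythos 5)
Your proof is correct, but it follows a genuinely different route from the paper's. You characterize $\Ker{\mathbf{B}_{\mathcal{K}}(\chi)}$ directly: the easy inclusion $\mathcal{S}_t \subseteq \Ker{\mathbf{B}_{\mathcal{K}}(\chi)}$ via the column-sum property of the incidence matrix and the projector identity $\mathbf{P}(\bar{\mathbf{p}}_{ij})\mathbf{p}_{ij}=\mathbf{0}$, and the reverse inclusion via the ``common scaling factor'' argument, where completeness of $\mathcal{K}$ forces $\boldsymbol{\delta}_j-\boldsymbol{\delta}_i=\lambda_{ij}(\mathbf{p}_j-\mathbf{p}_i)$ for every pair and non-degeneracy forces all $\lambda_{ij}$ to coincide. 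The paper instead proves the \emph{rank} statement by induction on $n$: it exploits the rank-one structure of $\mathbf{P}(\bar{\mathbf{p}}_{ij})$ in the planar case to compress each $d\times d$ block row into a single row $\mathbf{r}_{ij}^{\top}$, verifies the base case $n=3$ by inspection, and in the inductive step produces two new linearly independent rows from an agent not collinear with two existing ones; crucially, it obtains only the lower bound $\rank{\mathbf{B}(\bar n)}\geq 2\bar n-3$ this way and must invoke Lemma~4 of~\cite{2016-ZhaoZelazo-RigidityRd} for the matching upper bound. Your argument buys self-containedness (no external rank bound is needed, since you compute $\dim(\mathcal{S}_t)=d+1$ and apply rank--nullity) and it delivers the kernel equality in the form the theorem actually asserts; the paper's induction buys an explicit row-by-row picture of where independence comes from as agents are added. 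The one place you should tighten is the propagation step: for a pair $(i,j)$ whose triple with some $k$ is collinear the displayed identity does not separate the scalars, so you should state explicitly that non-degeneracy guarantees, for \emph{every} pair $(i,j)$, some $k$ off the line through $\mathbf{p}_i$ and $\mathbf{p}_j$, and then argue that these non-collinear triples chain all the $\lambda_{ij}$ together; as written this is asserted rather than proved, though the argument is standard and does go through.
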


\begin{proof}
The bearing rigidity matrix associated to  $(\mathcal{K}, \config)$ is so that the $k$-th row block, corresponding to $e_k=(v_i,v_j) \in \mathcal{E}_\mathcal{K}$ with $i < j$\footnote{For undirected graph, it is always possible to choose a suitable edges labeling ensuring the desired requirement.}, has the following form where $\mathbf{0}_{p \times q}$ is the $(p \times q)$ zero matrix and $\mathbf{B}_{ij} =s_{ij} \mathbf{P}\left(\vectpNorm{i}{j}\right)\in \mathbb{R}^{d\times d}$
\begin{equation}
\label{Rd:eq:Bearing_Matrix_Construction}
\colvec{
\mathbf{0}_{d\times d(i-1)} & -\mathbf{B}_{ij} & \mathbf{0}_{d \times d(j-i-1)} & \ \ \mathbf{B}_{ij} & \mathbf{0}_{d \times d(n-j)} }.
\end{equation}
For $d=2$, $\mathbf{B}_{ij} =
s_{ij}^3\mathbf{r}_{ij}\mathbf{r}_{ij}^\top$,
with $\mathbf{r}_{ij}=[
p_{ij}^y \;  -p_{ij}^x
]^\top\in\mathbb{R}^{2}$, where $p_{ij}^{x}, p_{ij}^{y} \in \mathbb{R}$ are the (scalar) components of vector $\vectp{i}{j}  \in \mathbb{R}^2$ along the $x$-axis and $y$-axis of the global frame, respectively. Note that $\mathbf{B}_{ij}$ is neither zero nor full-rank, hence the $k$-row block~\eqref{Rd:eq:Bearing_Matrix_Construction} has unitary rank. For this reason, for each edge $e_k=\left(v_i, v_j \right)  \in \mathcal{E}_\mathcal{K}$  with $i < j$, we consider the next opportunely scaled version of~\eqref{Rd:eq:Bearing_Matrix_Construction}, %
\begin{equation}
\colvec{
\mathbf{0}_{1\times 2(i-1)} & -\mathbf{r}_{ij}^\top & \mathbf{0}_{1 \times 2(j-i-1)} & \mathbf{r}_{ij}^\top & \mathbf{0}_{1 \times 2(n-j)} },
\end{equation}
obtaining the matrix $\mathbf{B}\left(n\right) \in \mathbb{R}^{((n-1)n/2)\times 2n}$. 
This has the same rank of $\bearingMatrixK$ but lower dimensions, so hereafter, we consider $\mathbf{B}\left(n\right)$ instead of  $\bearingMatrixK$ and we prove thesis by induction on the number $n$ of agents in the formation.\\
\textbf{Base case}:  $n=3$ 
\\ We aim at proving that $\rank{\mathbf{B}\left(3\right)}=3$.
To do so, observe that
\begin{align}
\mathbf{B}\left(3\right) = \colvec{
	-\mathbf{r}_{12}^\top & \mathbf{r}_{12}^\top & \mathbf{0}_{1 \times 2} \\
-\mathbf{r}_{13}^\top & \mathbf{0}_{1 \times 2} & \mathbf{r}_{13}^\top \\
\mathbf{0}_{1 \times 2} & -\mathbf{r}_{23}^\top & \mathbf{r}_{23}^\top }\in \mathbb{R}^{3\times 6}
\end{align}
is full-rank whether the agents are not all collinear.  
Because of non-colinear formation hypothesis the thesis is  proved.\\
\textbf{Inductive step} $ n=\bar{n}$:  Note that, given a set of $\bar{n}$ agents, for each subset containing $\bar{n}-1$ elements, it is possible to partition $\mathbf{B}\left(\bar{n}\right)$ so that {\begin{equation}
\label{Rd:eq:B_bar_n}
\mathbf{B}\left(\bar{n}\right) =
\colvec{  \mathbf{B}\left(\bar{n}-1\right) & \vectOnes{\eta} \otimes \mathbf{0}_2^\top \\[-0.25cm] \\\
\mathrm{diag}\{\mathbf{r}_{i\bar{n}}^\top\}_{i=1}^{(\bar{n}-1)} & \mathrm{col}\{\mathbf{r}_{i\bar{n}}^\top\}_{i=1}^{(\bar{n}-1)}}
\begin{array}{c}
{\footnotesize 1^{\text{st}}  \;\text{block}}  \\[-0.25cm] \\
{\footnotesize 2^{\text{nd}}  \; \text{block}}
\end{array}
\end{equation}}
where the first block has $\eta=\left(\bar{n}-1\right)\left(\bar{n}-2\right)/2$ rows related to the edges incident to the first $\bar{n}-1$ agents, while the second block has $\bar{n}$ rows related to the edges connecting the $\bar{n}$-th agent  with the first $\bar{n}-1$ agents.
For inductive hypothesis the thesis holds for $\bar{n}-1 \geq 3$, i.e., $\rank{\mathbf{B}\left(\bar{n}-1\right)}=2\bar{n}-5$,  thus, 
the first block of $\mathbf{B}\left(\bar{n}\right)$ in~\eqref{Rd:eq:B_bar_n} contains $2\bar{n}-5$ linearly independent rows. Moreover, there are at least two agents, for instance the $i$-th and $j$-th agent, that are not aligned with the $\bar{n}$-th agent, hence it does not exist $c\in\mathbb{R}$ such that $\mathbf{r}_{i\bar{n}} =c \mathbf{r}_{j\bar{n}}$ and the rows related to the edges $\left(v_i, v_{\bar{n}}\right)$, and $ \left(v_j, v_{\bar{n}}\right)$ are linearly independent {with respect to} the rows of the first block. $\mathbf{B}\left(\bar{n}\right)$ has thus at least $2\bar{n}-3$ linearly independent rows, and, since  $\rank{\mathbf{B}\left(\bar{n}\right)}\leq 2\bar{n}-3$ for {Lemma}~4 in~\cite{2016-ZhaoZelazo-RigidityRd}, then it must be  $\rank{\mathbf{B}\left(\bar{n}\right)}= 2\bar{n}-3$ concluding the proof for the case $d=2$.\\
When $d=3$, the matrix $\mathbf{B}_{ij}$ in~\eqref{Rd:eq:Bearing_Matrix_Construction} turns out to be
\begin{equation}
\label{Rd:eq:B_ij_d3}
\mathbf{B}_{ij} \! =
s_{ij}^3 \! \colvec{
\left(p_{ij}^y\right)^2 \! \! + \!\left( p_{ij}^z \right)^2\! \! & -p_{ij}^x p_{ij}^y \! \! & -p_{ij}^x p_{ij}^z \! \! \\
\rule{0pt}{3ex}   
-p_{ij}^yp_{ij}^x \! \!  & 	\left(p_{ij}^x\right)^2 \! \!+ \! \left( p_{ij}^z \right)^2 &  	-p_{ij}^y p_{ij}^z \! \! \\
\rule{0pt}{3ex}   
-p_{ij}^z p_{ij}^x \! \! & -p_{ij}^z p_{ij}^y \! \! &  \left(p_{ij}^x \right)^2 \!  \! + \! \left( p_{ij}^y \right)^2  \\
} 
\end{equation}
where $p_{ij}^{x}, p_{ij}^{y}, p_{ij}^{z} \in \mathbb{R}$ are the (scalar) components of vector $\vectp{i}{j}  \in \mathbb{R}^3$ along the $x$-axis, $y$-axis, and $z$-axis of the $\mathscr{F}_W$, respectively. The proof for this case thus follows the same inductive reasoning performed for $d=2$.
\end{proof}

\bibliographystyle{IEEEtran}

\bibliography{./bibCustom}

\begin{thebibliography}{10}
\providecommand{\url}[1]{#1}
\csname url@samestyle\endcsname
\providecommand{\newblock}{\relax}
\providecommand{\bibinfo}[2]{#2}
\providecommand{\BIBentrySTDinterwordspacing}{\spaceskip=0pt\relax}
\providecommand{\BIBentryALTinterwordstretchfactor}{4}
\providecommand{\BIBentryALTinterwordspacing}{\spaceskip=\fontdimen2\font plus
\BIBentryALTinterwordstretchfactor\fontdimen3\font minus
  \fontdimen4\font\relax}
\providecommand{\BIBforeignlanguage}[2]{{%
\expandafter\ifx\csname l@#1\endcsname\relax
\typeout{** WARNING: IEEEtran.bst: No hyphenation pattern has been}%
\typeout{** loaded for the language `#1'. Using the pattern for}%
\typeout{** the default language instead.}%
\else
\language=\csname l@#1\endcsname
\fi
#2}}
\providecommand{\BIBdecl}{\relax}
\BIBdecl

\bibitem{Euler}
L.~Euler, \emph{Opera Postuma}.\hskip 1em plus 0.5em minus 0.4em\relax
  Petropoli, 1862, vol.~1.

\bibitem{thorpe1999rigidity}
M.~F. Thorpe and P.~M. Duxbury, \emph{Rigidity theory and applications}.\hskip
  1em plus 0.5em minus 0.4em\relax Springer Science \& Business Media, 1999.

\bibitem{asimow1978rigidity}
L.~Asimow and B.~Roth, ``The rigidity of graphs,'' \emph{Trans. of the American
  Mathematical Society}, vol. 245, pp. 279--289, 1978.

\bibitem{Ahn2020}
H.-S. Ahn, \emph{Formation Control: Approaches to Distributed Agents}.\hskip
  1em plus 0.5em minus 0.4em\relax Springer International Publishing, 2020.

\bibitem{de2018formation}
M.~de~Queiroz, X.~Cai, and M.~Feemster, \emph{Formation Control of Multi-Agent
  Systems: A Graph Rigidity Approach}.\hskip 1em plus 0.5em minus 0.4em\relax
  Wiley \& Sons, 2018.

\bibitem{krick2009stabilisation}
L.~Krick, M.~E. Broucke, and B.~A. Francis, ``Stabilisation of infinitesimally
  rigid formations of multi-robot networks,'' \emph{Int. J. of Control},
  vol.~82, no.~3, pp. 423--439, 2009.

\bibitem{cai2015adaptive}
X.~Cai and M.~De~Queiroz, ``Adaptive rigidity-based formation control for
  multi-robotic vehicles with dynamics,'' \emph{IEEE Trans. on Control Systems
  Technology}, vol.~23, no.~1, pp. 389--396, 2015.

\bibitem{ramazani2017rigidity}
S.~Ramazani, R.~Selmic, and M.~de~Queiroz, ``Rigidity-based multiagent layered
  formation control,'' \emph{IEEE Trans. on Cybernetics}, vol.~47, no.~8, pp.
  1902--1913, 2017.

\bibitem{sun2018quantization}
Z.~Sun, H.~Garcia~de Marina, B.~D. Anderson, and M.~Cao, ``Quantization effects
  and convergence properties of rigid formation control systems with quantized
  distance measurements,'' \emph{Int. J. of Robust and Nonlinear Control},
  vol.~28, no.~16, pp. 4865--4884, 2018.

\bibitem{tron2015rigid}
R.~Tron, L.~Carlone, F.~Dellaert, and K.~Daniilidis, ``Rigid components
  identification and rigidity control in bearing-only localization using the
  graph cycle basis,'' in \emph{American Control Conf.}, 2015, pp. 3911--3918.

\bibitem{zhao2016localizability}
S.~Zhao and D.~Zelazo, ``Localizability and distributed protocols for
  bearing-based network localization in arbitrary dimensions,''
  \emph{Automatica}, vol.~69, pp. 334--341, 2016.

\bibitem{zhao2019bearing}
------, ``Bearing rigidity theory and its applications for control and
  estimation of network systems: Life beyond distance rigidity,'' \emph{IEEE
  Control Systems Magazine}, vol.~39, no.~2, pp. 66--83, 2019.

\bibitem{michieletto2019formation}
G.~Michieletto and A.~Cenedese, ``Formation control for fully actuated systems:
  a quaternion-based bearing rigidity approach,'' in \emph{European Control
  Conf.}, 2019.

\bibitem{Spica2016}
R.~{Spica} and P.~R. {Giordano}, ``Active decentralized scale estimation for
  bearing-based localization,'' in \emph{2016 IEEE/RSJ International Conference
  on Intelligent Robots and Systems (IROS)}, 2016, pp. 5084--5091.

\bibitem{eren2002framework}
T.~Eren, P.~N. Belhumeur, B.~D. Anderson, and A.~S. Morse, ``A framework for
  maintaining formations based on rigidity,'' in \emph{IFAC World Congress},
  2002, pp. 2752--2757.

\bibitem{hendrickx2007directed}
J.~M. Hendrickx, B.~D. Anderson, J.-C. Delvenne, and V.~D. Blondel, ``Directed
  graphs for the analysis of rigidity and persistence in autonomous agent
  systems,'' \emph{Int. J. of Robust and Nonlinear Control}, vol.~17, no.
  10-11, pp. 960--981, 2007.

\bibitem{baillieul2007combinatorial}
J.~Baillieul and L.~McCoy, ``The combinatorial graph theory of structured
  formations,'' in \emph{Conf. on Decision and Control}, 2007, pp. 3609--3615.

\bibitem{zhu2009stiffness}
G.~Zhu and J.~Hu, ``Stiffness matrix and quantitative measure of formation
  rigidity,'' in \emph{Conf. on Decision and Control held jointly with Chinese
  Control Conf.}, 2009, pp. 3057--3062.

\bibitem{wu2010rigidity}
C.~Wu, Y.~Zhang, W.~Sheng, and S.~Kanchi, ``Rigidity guided localisation for
  mobile robotic sensor networks,'' \emph{Int. J. of Ad Hoc and Ubiquitous
  Computing}, vol.~6, no.~2, pp. 114--128, 2010.

\bibitem{williams2014evaluating}
R.~K. Williams, A.~Gasparri, A.~Priolo, and G.~S. Sukhatme, ``Evaluating
  network rigidity in realistic systems: Decentralization, asynchronicity, and
  parallelization,'' \emph{IEEE Trans. on Robotics}, vol.~30, no.~4, pp.
  950--965, 2014.

\bibitem{zelazo2015decentralized}
D.~Zelazo, A.~Franchi, H.~H. B{\"u}lthoff, and P.~Robuffo~Giordano,
  ``Decentralized rigidity maintenance control with range measurements for
  multi-robot systems,'' \emph{The Int. J. of Robotics Research}, vol.~34,
  no.~1, pp. 105--128, 2015.

\bibitem{eren2007using}
T.~Eren, ``Using angle of arrival (bearing) information for localization in
  robot networks,'' \emph{Turkish J. of Electrical Engineering \& Computer
  Sciences}, vol.~15, no.~2, pp. 169--186, 2007.

\bibitem{bishop2011stabilization}
A.~N. Bishop, I.~Shames, and B.~D. Anderson, ``Stabilization of rigid
  formations with direction-only constraints,'' in \emph{Conf. on Decision and
  Control and European Control Conf.,}, 2011, pp. 746--752.

\bibitem{2012-FranchiGiordano-RigidityRd}
A.~Franchi and P.~{Robuffo Giordano}, ``Decentralized control of parallel rigid
  formations with direction constraints and bearing measurements,'' in
  \emph{Conf. on Decision and Control}, 2012, pp. 5310--5317.

\bibitem{eren2012formation}
T.~Eren, ``Formation shape control based on bearing rigidity,'' \emph{Int. J.
  of Control}, vol.~85, no.~9, pp. 1361--1379, 2012.

\bibitem{oh2014formation}
K.-K. Oh and H.-S. Ahn, ``Formation control and network localization via
  orientation alignment,'' \emph{IEEE Trans. on Automatic Control}, vol.~59,
  no.~2, pp. 540--545, 2014.

\bibitem{2016-ZhaoZelazo-RigidityRd}
S.~Zhao and D.~Zelazo, ``Bearing rigidity and almost global bearing-only
  formation stabilization,'' \emph{IEEE Trans. on Automatic Control}, vol.~61,
  no.~5, pp. 1255--1268, 2016.

\bibitem{karimian2017theory}
A.~Karimian and R.~Tron, ``Theory and methods for bearing rigidity recovery,''
  in \emph{Conf. on Decision and Control}, 2017, pp. 2228--2235.

\bibitem{zhao2017laman}
S.~Zhao, Z.~Sun, D.~Zelazo, M.-H. Trinh, and H.-S. Ahn, ``Laman graphs are
  generically bearing rigid in arbitrary dimensions,'' in \emph{Conf. on
  Decision and Control}, 2017, pp. 3356--3361.

\bibitem{2014-ZelazoFranchiGiordano-RigiditySE2}
D.~Zelazo, A.~Franchi, and P.~{Robuffo Giordano}, ``Rigidity theory in
  {$SE(2)$} for unscaled relative position estimation using only bearing
  measurements,'' in \emph{European Control Conf.}, 2014, pp. 2703--2708.

\bibitem{2015-ZelazoGiordanoFranchi-RigiditySE2}
D.~Zelazo, P.~{Robuffo Giordano}, and A.~Franchi, ``Bearing-only formation
  control using an {$SE(2)$} rigidity theory,'' in \emph{Conf. on Decision and
  Control}, 2015, pp. 6121--6126.

\bibitem{2012-FranchiMasoneGrabe-RigidityR3xSO2}
A.~Franchi, C.~Masone, V.~Grabe, M.~Ryll, H.~H. B{\"u}lthoff, and P.~{Robuffo
  Giordano}, ``Modeling and control of {UAV} bearing formations with bilateral
  high-level steering,'' \emph{The Int. J. of Robotics Research}, vol.~31,
  no.~12, pp. 1504--1525, 2012.

\bibitem{2016-SchianoFranchiZelazoGiordano-RigidityR3xSO2}
F.~Schiano, A.~Franchi, D.~Zelazo, and P.~{Robuffo Giordano}, ``A
  rigidity-based decentralized bearing formation controller for groups of
  quadrotor {UAVs},'' in \emph{Int. Conf. on Intelligent Robots and Systems},
  2016, pp. 5099--5106.

\bibitem{2016-SpicaGiordano-RigidityR3xSO2}
R.~Spica and P.~{Robuffo Giordano}, ``Active decentralized scale estimation for
  bearing-based localization,'' in \emph{Int. Conf. on Intelligent Robots and
  Systems}, 2016, pp. 5084--5091.

\bibitem{2016-MichielettoCenedeseFranchi-RigiditySE3}
G.~Michieletto, A.~Cenedese, and A.~Franchi, ``Bearing rigidity theory in
  {$SE(3)$},'' in \emph{Conf. on Decision and Control}, 2016, pp. 5950--5955.

\bibitem{Chen2019}
L.~Chen, M.~Cao, and C.~Li, ``Bearing rigidity and formation stabilization for
  multiple rigid bodies in se(3),'' \emph{Numerical algebra control and
  optimization}, vol.~9, no.~3, pp. 257--267, 9 2019.

\bibitem{Schiano2018}
F.~{Schiano} and R.~{Tron}, ``The dynamic bearing observability matrix
  nonlinear observability and estimation for multi-agent systems,'' in
  \emph{2018 IEEE International Conference on Robotics and Automation (ICRA)},
  2018, pp. 3669--3676.

\bibitem{Heintzman2018}
L.~{Heintzman} and R.~K. {Williams}, ``Nonlinear observability of unicycle
  multi-agent teams subject to sensor bias and environmental disturbance,'' in
  \emph{2018 IEEE Conference on Decision and Control (CDC)}, 2018, pp.
  6186--6191.

\bibitem{connelly2005generic}
R.~Connelly, ``Generic global rigidity,'' \emph{Discrete \& Computational
  Geometry}, vol.~33, no.~4, pp. 549--563, 2005.

\bibitem{gortler2013affine}
L.~L. Steven J.~Gortler, Craig~Gotsman and D.~P. Thurston, ``On affine
  rigidity,'' \emph{J. of Computational Geometry}, vol.~4, no.~1, pp. 160--181,
  2013.

\bibitem{gortler2014characterizing}
S.~J. Gortler and D.~P. Thurston, ``Characterizing the universal rigidity of
  generic frameworks,'' \emph{Discrete \& Computational Geometry}, vol.~51,
  no.~4, pp. 1017--1036, 2014.

\bibitem{Anderson2008}
B.~D.~O. {Anderson}, C.~{Yu}, B.~{Fidan}, and J.~M. {Hendrickx}, ``Rigid graph
  control architectures for autonomous formations,'' \emph{IEEE Control Systems
  Magazine}, vol.~28, no.~6, pp. 48--63, 2008.

\bibitem{asimow1979rigidity}
L.~Asimow and B.~Roth, ``The rigidity of graphs, ii,'' \emph{Journal of
  Mathematical Analysis and Applications}, vol.~68, pp. 171--190, 1979.

\end{thebibliography}
\begin{IEEEbiography}[{\includegraphics[width=1in,height=1.25in,clip,keepaspectratio]{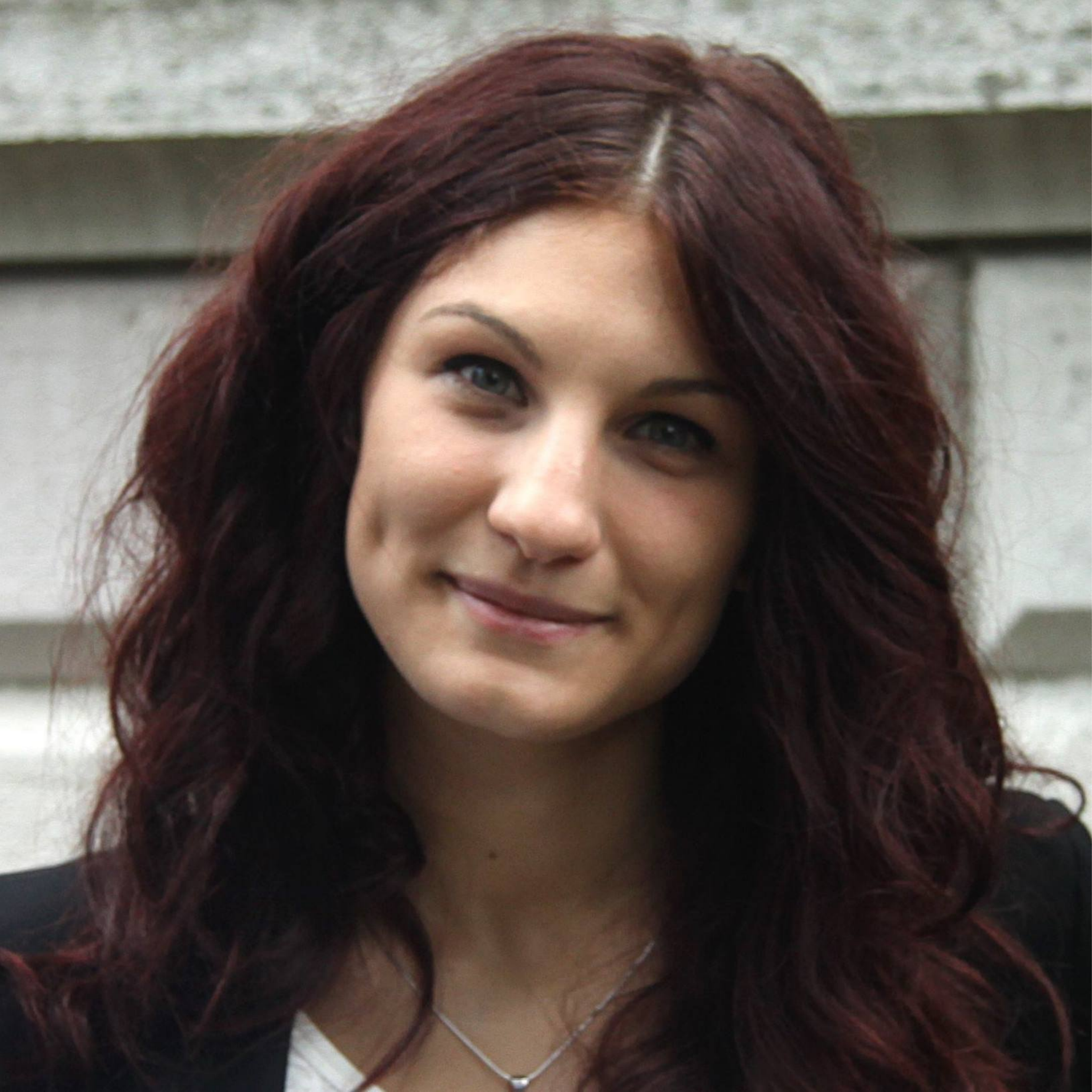}}]
 {\bf Giulia Michieletto} (M'18) received the M.S. (2014) and the Ph.D. (2018) degrees from the University of Padova, Italy, where she is currently an Assistant Professor with the Department of Management and Engineering. From March 2016 to February 2017, she was a Visiting Researcher at LAAS-CNRS, Toulouse, France.  From February 2018 to November 2019, she was a post-doc fellow with the SPARCS group at University of Padova, Italy.  Her main research interests include multi-agent systems modeling and control with a special regard to networked formations of aerial vehicles and nano satellites.
 \end{IEEEbiography}
\begin{IEEEbiography}[{\includegraphics[width=1in,height=1.25in,clip,keepaspectratio]{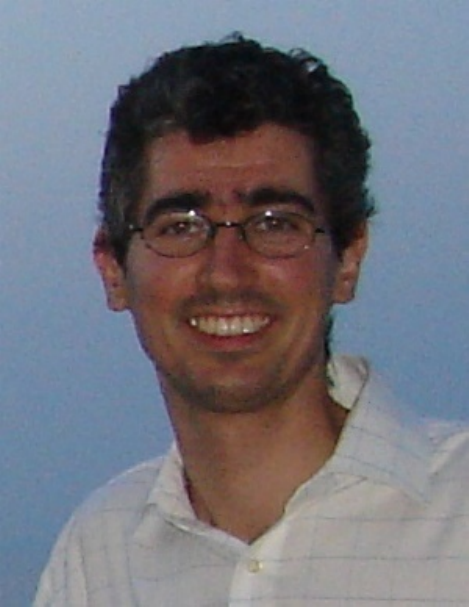}}]
 {\bf Angelo Cenedese} (M'11) received the M.S. (1999) and the Ph.D. (2004) degrees from the University of Padova, Italy, where he is currently an Associate Professor with the Department of Information Engineering and leader of the SPARCS research group. He has held several visiting positions at the UKAEA-JET laboratories (UK), the UCLA Vision Lab (CA-USA), the F4E European Agency (Spain). His research interests include system modeling, control theory and its applications, sensor and actuator networks, multi agent systems. On these subjects, he has published more than 170 papers and holds three patents.
  \end{IEEEbiography}
\begin{IEEEbiography}[{\includegraphics[width=1in,height=1.25in,clip,keepaspectratio]{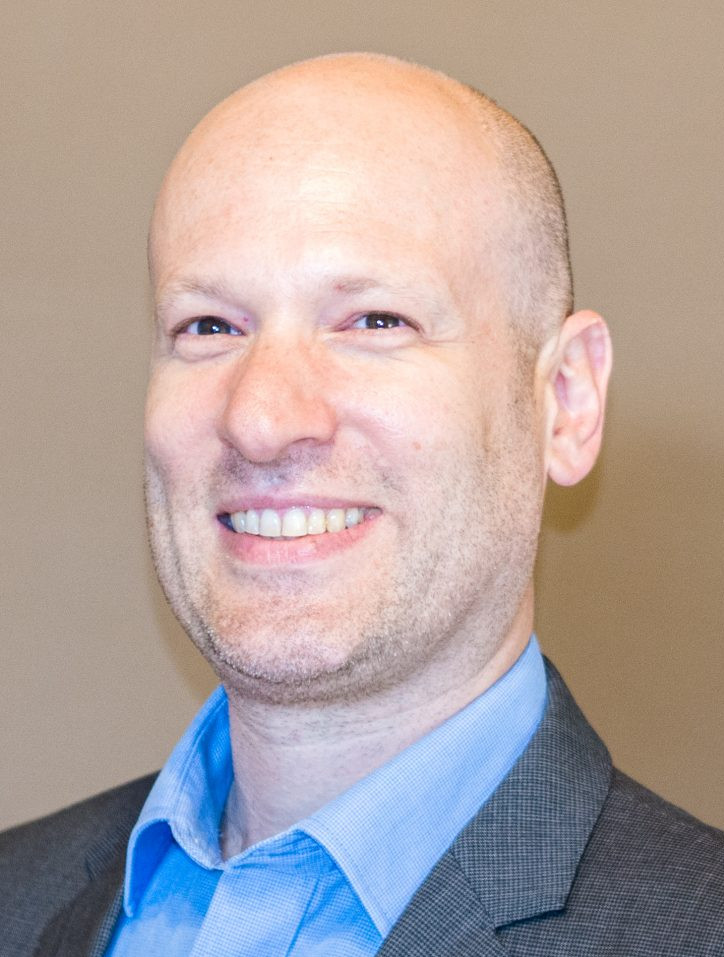}}]
 {\bf Daniel~Zelazo} is an Associate Professor of Aerospace Engineering at the 
 Technion - Israel Institute of Technology. He received his BSc. (1999) and M.Eng 
 (2001) degrees in Electrical Engineering from the Massachusetts Institute of 
 Technology. In 2009, he completed his Ph.D. from the University of Washington 
 in Aeronautics and Astronautics. From 2010-2012 he served as a post-doctoral 
 research associate and lecturer at the Institute for Systems Theory \& 
 Automatic Control in the University of Stuttgart. His research interests 
 include topics related to multi-agent systems, optimization, and graph theory.
 \end{IEEEbiography}

\end{document}